\def\final{0}  
\def\iflong{\iffalse}
\newcommand{\yu}[1]{{\color{red}[{\tiny \textbf{Yu:} \bf #1}]\marginpar{\color{red}*}}}
\newcommand{\yutaro}[1]{{\color{red}[{\tiny \textbf{Yutaro:} \bf #1}]\marginpar{\color{red}*}}}
\newcommand{\tamas}[1]{{\color{red}[{\tiny \textbf{Tam\'as:} \bf #1}]\marginpar{\color{red}*}}}
\newcommand{\kristof}[1]{{\color{red}[{\tiny \textbf{Krist\'of:} \bf #1}]\marginpar{\color{red}*}}}
\newcommand{\yu}[1]{}
\newcommand{\yutaro}[1]{}
\newcommand{\tamas}[1]{}
\newcommand{\kristof}[1]{}
\numberwithin{equation}{section}
\theoremstyle{plain}
\newtheorem{theorem}{Theorem}[section]
\newtheorem{lemma}[theorem]{Lemma}
\newtheorem{claim}[theorem]{Claim}
\theoremstyle{definition}
\newtheorem{definition}[theorem]{Definition}
\newtheorem{question}[theorem]{Question}
\newcommand{\RR}{{\mathbb{R}}}
\newcommand{\ZZ}{{\mathbb{Z}}}
\newcommand{\bM}{{\mathbf{M}}}
\newcommand{\bN}{{\mathbf{N}}}
\newcommand{\cF}{{\mathcal{F}}}
\newcommand{\cH}{{\mathcal{H}}}
\newcommand{\cI}{{\mathcal{I}}}
\newcommand{\cO}{{\mathcal{O}}}
\newcommand{\cl}{\mathrm{cl}}
\newcommand{\rmax}{{r_\mathrm{max}}}
\newcommand{\rmin}{{r_\mathrm{min}}}
\newcommand{\rsum}{{r_\mathrm{sum}}}
\DeclareMathOperator\dist{dist}
\def\namedlabel#1#2{\begingroup
    #2%
    \def\@currentlabel{#2}%
    \phantomsection\label{#1}\endgroup
}
\renewcommand{\algocf@caption@boxruled}{%
  \hrule
  \hbox to \hsize{%
    \vrule\hskip-0.4pt
    \vbox{   
       \vskip\interspacetitleboxruled%
       \unhbox\algocf@capbox\hfill
       \vskip\interspacetitleboxruled
       }%
     \hskip-0.4pt\vrule%
   }\nointerlineskip%
}%
\newcounter{algoline}
\title{Matroid Intersection under Restricted Oracles}
\author{
Krist\'of B\'erczi\thanks{ELKH-ELTE Egerv\'ary Research Group and MTA-ELTE Momentum Matroid Optimization Research Group, Department of Operations Research, E\"otv\"os Lor\'and University, Budapest, Hungary. Email: \texttt{kristof.berczi@ttk.elte.hu, tamas.kiraly@ttk.elte.hu}}
\and
Tam\'as Kir\'aly\footnotemark[1]
\and
Yutaro Yamaguchi\thanks{Department of Information and Physical Sciences, Graduate School of Information Science and Technology, Osaka University, Osaka, Japan. Email: \texttt{yutaro.yamaguchi@ist.osaka-u.ac.jp}}
\and
Yu Yokoi\thanks{Principles of Informatics Research Division, National Institute of Informatics, Tokyo, Japan. Email: \texttt{yokoi@nii.ac.jp}}
}
\date{\empty}
\begin{document}
\maketitle
\thispagestyle{empty}

\begin{abstract}
Matroid intersection is one of the most powerful frameworks of matroid theory that generalizes various problems in combinatorial optimization. Edmonds' fundamental theorem provides a min-max characterization for the unweighted setting, while Frank's weight-splitting theorem provides one for the weighted case. Several efficient algorithms were developed for these problems, all relying on the usage of one of the conventional oracles for both matroids.

In the present paper, we consider the tractability of the matroid intersection problem under restricted oracles. In particular, we focus on the rank sum, common independence, and maximum rank oracles. We give a strongly polynomial-time algorithm for weighted matroid intersection under the rank sum oracle. In the common independence oracle model, we prove that the unweighted matroid intersection problem is tractable when one of the matroids is a partition matroid, and that even the weighted case is solvable when one of the matroids is an elementary split matroid. Finally, we show that the common independence and maximum rank oracles together are strong enough to realize the steps of our algorithm under the rank sum oracle. 

\bigskip

\noindent \textbf{Keywords:} Matroid intersection, Tractability, Rank sum oracle, Common independence oracle, Maximum rank oracle
\end{abstract}

\clearpage
\setcounter{page}{1}

\section{Introduction}
\label{sec:introduction}

A cornerstone of matroid theory is the efficient solvability of the matroid intersection problem introduced by Edmonds \cite{edmonds1970submodular}. 
Efficient algorithms for weighted matroid intersection were developed subsequently by Edmonds \cite{edmonds1979matroid}, by Lawler \cite{lawler1970optimal,lawler1976combinatorial}, and by Iri and Tomizawa \cite{iri1976algorithm}.
The min-max duality theorem of Edmonds \cite{edmonds1970submodular} for the unweighted matroid intersection problem was generalized by Frank \cite{frank1981weighted} to the weighted case.
These results do not only provide a well-established framework that includes various tractable combinatorial optimization problems such as bipartite matching and arborescence packing, but in certain cases they are unavoidable in solving natural optimization problems that seem to be unrelated to matroids.
A beautiful example is the problem of computing a cheapest rooted $k$-connected spanning subgraph of a digraph \cite{frank2009rooted}.
This is a pure graph optimization problem and yet the only known polynomial algorithm is based on the recognition that minimal rooted $k$-connected subgraphs of a digraph form the common bases of two matroids, and therefore a weighted matroid intersection algorithm can be applied.

In order to design matroid algorithms and to analyze their complexity, it should be clarified how matroids are given.
As the number of bases can be exponential in the size of the ground set, defining a matroid in an explicit form is inefficient.
Rather than giving a matroid as an explicit input, it is usually assumed that one of the standard oracles is available, and the complexity of the algorithm is measured by the number of oracle calls and other elementary steps.
Another way to define a matroid is to give an explicit linear representation, but this restricts the scope of the algorithm to linear matroids for which such an explicit representation is known.

For both the unweighted and weighted problems, a variety of efficient algorithms have been developed; see e.g. \cite{edmonds1970submodular, edmonds1979matroid, lawler1970optimal, aigner1971matching, lawler1975matroid, iri1976algorithm, cunningham1986improved, frank1981weighted, brezovec1986two, huang2019exact}.
A common feature of these algorithms, and also all previous studies on matroid intersection, is that they assume the availability of one of the standard oracles for both matroids; e.g., we can ask for the rank of a subset in each of them.
Our main contribution is showing that this assumption is not necessary for the tractability of matroid intersection, not even in the weighted setting.

One motivation for studying restricted oracles comes from polymatroid matching, a framework introduced by Lawler~\cite{lawler1976combinatorial} as a common generalization of matroid intersection and non-bipartite matching.
In \cite{lovasz2009matching}, Edmonds' theorem was deduced from polymatroid matching using a sophisticated argument. The main point is that when the matroid intersection problem is formulated as a polymatroid matching problem, only the \emph{rank sum function} of the two matroids is used rather than the two rank functions.
Although the polymatroid matching problem cannot be solved in polynomial time in general~\cite{lovasz1981matroid, jensen1982complexity}, the hardness was shown through special instances that seem to be far from matroid intersection.
This suggests that matroid intersection might still be tractable when only the sum of the rank functions is available.

We mention that another natural oracle to consider is the \emph{minimum rank function}, which answers the smaller value among the two ranks of a subset.
It follows from the polyhedral results of Edmonds \cite{edmonds1970submodular} that this oracle suffices to describe the convex hull of common independent sets.
In an unpublished manuscript, Bárász \cite{egresqp-06-03} gave a polynomial-time algorithm for unweighted matroid intersection under the minimum rank oracle.
We will present additional results about this oracle in a separate paper~\cite{inpreparation}.

\paragraph{Our results}
Our goal is to settle the tractability of the weighted matroid intersection problem under restricted oracles. 
In particular, we will focus on three different oracles: rank sum, common independence, and maximum rank oracles.

The study of the rank sum oracle is motivated by the above discussed connection to polymatroid matching results.
The difficulty of giving an efficient algorithm is that the usual augmenting path approach cannot be applied directly, since the exchangeability graphs are not determined by the rank sum oracle.
Still, we are able to give a strongly polynomial-time algorithm for the weighted matroid intersection problem by emulating the Bellman--Ford algorithm without explicitly knowing the underlying graph. 

\begin{theorem}
\label{thm:ranksum}
There exists a strongly polynomial-time algorithm for the weighted matroid intersection problem in the rank sum oracle model.
\end{theorem}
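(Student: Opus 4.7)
The plan is to follow the classical augmenting-path framework for weighted matroid intersection, maintaining a chain of common independent sets $I_0 \subset I_1 \subset \cdots$ with $|I_k| = k$ and $w(I_k)$ maximum among common independent sets of size $k$. The core subroutine is to augment $I_k$ to $I_{k+1}$ by finding a minimum-cost augmenting path in the exchange graph $D_{I_k}$. The fundamental obstacle is that the arcs of $D_{I_k}$ depend on whether $I_k - y + x$ is independent in $M_1$ or in $M_2$ individually, whereas the rank sum oracle reveals only $\rho(I_k - y + x) = r_1(I_k - y + x) + r_2(I_k - y + x)$. A single query resolves the easy case $\rho(I_k - y + x) = 2|I_k|$, where $I_k - y + x$ is independent in both matroids and bidirectional arcs are present; but in every other case the arc structure is partially or entirely hidden.

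To circumvent this, I would emulate the Bellman--Ford shortest-path algorithm on $D_{I_k}$ without explicitly reconstructing its arcs. The key enabling observation is that, for any candidate symmetric-difference set $I_k \triangle T$, a single rank sum query decides whether $T$ corresponds to a valid augmentation, since this holds if and only if $\rho(I_k \triangle T) = 2(|I_k| + 1)$. Rather than relaxing distances along individual arcs, I would propagate along candidate prefix-sets: for each element $e$ and each iteration $j$, maintain a label $d_j(e)$ equal to the minimum cost of a certified partial augmenting path of length at most $j$ ending at $e$. The Bellman--Ford update $d_{j+1}(e) \leftarrow \min_{e'} (d_j(e') + c(e', e))$ is then realized by a batch of rank sum queries on larger symmetric-difference sets, which jointly verify the feasibility of candidate extensions and implicitly handle the ambiguity about which matroid permits each exchange.

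The hardest step will be bounding the number of rank sum queries so that the procedure runs in strongly polynomial time: a naive implementation could enumerate exponentially many candidate prefix-sets, because the arc orientations are not directly observable. The central technical lemma will be a structural result showing that, at each Bellman--Ford iteration, the labels $d_{j+1}$ can be computed from $d_j$ using only polynomially many carefully chosen rank sum queries, even without distinguishing the two matroids. Combined with the standard bounds of $O(|E|)$ Bellman--Ford iterations per augmentation and $O(|E|)$ augmentations overall, this yields a strongly polynomial query complexity and establishes the theorem.
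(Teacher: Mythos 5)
Your high-level plan coincides with the paper's: emulate Bellman--Ford on the exchangeability graph $D[I]$, using rank sum queries on sets of the form $I\triangle P$ to certify candidate paths rather than reconstructing the arc set. The observation that $\rsum(I\triangle T)=2(|I|+1)$ detects a valid augmentation is also used in the paper. However, the proposal stops at exactly the point where the technical work begins, and it mis-identifies where that work lies. You frame the difficulty as controlling query complexity (``a naive implementation could enumerate exponentially many candidate prefix-sets''), but in the paper's scheme there is never any enumeration: the algorithm stores a single tentative sequence $P_e$ per element $e$, so the number of queries per iteration is trivially $O(n^2)$. The genuine difficulty is \emph{correctness}: one must exhibit concrete rank sum tests that, given only the stored sequences, decide exactly whether an arc of $D'[I]$ exists so that $P_{e'}+e$ is again a path. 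The paper's conditions $(\ast)$ and $(\ast\ast)$ do this, e.g.\ for a tentative path $P_x$ with $x\notin I$ and a candidate $y\in I$ one checks $\rsum(I\triangle P_x)=2|I|+1$ and $\rsum(I\triangle(P_x+y))=2|I|$. Proving that these local tests are equivalent to $(x,y)\in A_2'[I]$ (respectively $(y,x)\in A_1'[I]$) is the core of the argument; it requires an invariant showing $r_1$ and $r_2$ are individually determined on the sets $I\triangle P_e$ (Lemma~\ref{lem:invariants1}), plus a nontrivial argument via the perfect-matching characterization of exchanges (Lemmas~\ref{lem:UPM-inv} and~\ref{lem:UPM}) and a decomposition of the union of two matchings into paths and cycles to derive a contradiction with the Bellman--Ford optimality of shorter prefixes (Claims~\ref{claim:equivalence1} and~\ref{claim:equivalence2}).

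Two further points your sketch leaves implicit. First, the search must be restricted to $D'[I]$, the exchangeability graph with arcs entering $S_I$ or leaving $T_I$ deleted; otherwise the local rank sum tests are not sound, and one needs Lemma~\ref{lem:shortest-cheapest-path} to justify that a shortest cheapest path in $D'[I]$ is also one in $D[I]$. Second, since the rank sum oracle cannot distinguish $S_I$ from $T_I$ (only $S_I\cup T_I$ is visible), the algorithm must run the emulated Bellman--Ford from every $s\in S_I\cup T_I$, once forward and once in the reversed digraph, and take the best returned path. In short: the architecture you propose is the right one and matches the paper, but the statement ``the labels $d_{j+1}$ can be computed from $d_j$ using only polynomially many rank sum queries'' is asserted rather than established, and that assertion is precisely the theorem's content.
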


It is not difficult to see that a common independence oracle can be constructed with the help of a rank sum oracle.
Therefore, any algorithm that is based on the usage of a common independence oracle immediately translates into an algorithm that uses a rank sum oracle.
Nevertheless, the reverse implication does not hold, hence the common independence oracle is strictly weaker.
We show that unweighted matroid intersection remains tractable under the common independence oracle model when one of the matroids is a partition matroid.

\begin{theorem}
\label{thm:ci}
There exists a strongly polynomial-time algorithm for the unweighted matroid intersection problem in the common independence oracle model when one of the matroids is a partition matroid with all-one upper bound on the partition classes.
\end{theorem}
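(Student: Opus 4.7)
Write $M_2$ as the partition matroid on $E$ with classes $E_1,\ldots,E_p$ of capacity one; a common independent set of $M_1$ and $M_2$ is then exactly an $M_1$-independent partial transversal of $(E_i)$. The plan is to run a classical augmenting-path matroid intersection algorithm, using the common independence oracle $\cO$ to reconstruct the portion of the exchange graph $D(I)$ of the current common independent set $I$ that is needed.

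The key observation is that for $y\in E\setminus I$ and $x\in I$, the query $\cO((I\setminus\{x\})\cup\{y\})=1$ simultaneously certifies the $M_1$-arc $y\to x$ and the $M_2$-arc $x\to y$ in $D(I)$. Since $M_2$ has all-one capacities, the $M_2$-arc $x\to y$ exists iff either $y$'s class is absent from $I$ (call $y$ \emph{free}) or $x$ equals the unique class-mate $z_y\in I$ of $y$ (call $y$ \emph{covered}). Hence, by issuing the queries $\cO(I\cup\{y\})$ and $\cO((I\setminus\{x\})\cup\{y\})$ for every $(x,y)\in I\times(E\setminus I)$, I can (i) detect any direct augmentation, (ii) recover the class-mate map on $E\setminus I$, and (iii) read off all $M_1$-arcs from free elements. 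A few ambiguous cases, most notably $|A_y|=1$ where $A_y:=\{x\in I:\cO((I\setminus\{x\})\cup\{y\})=1\}$ (which could mean ``$y$ free and $M_1$-parallel to $x$'' or ``$y$ covered with $z_y=x$''), can be resolved by a small number of auxiliary queries using elements whose status is already determined.

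The main obstacle is detecting $M_1$-arcs $y\to x$ from a covered $y$ to an $x\neq z_y$: the query $\cO((I\setminus\{x\})\cup\{y\})=0$ is forced by the $M_2$-condition and is uninformative about $M_1$-independence. The plan to bypass this is a structural lemma on shortest augmenting paths: when no direct augmentation exists, any shortest augmenting path $y_0,x_1,y_1,\ldots,x_k,y_k$ in $D(I)$ must satisfy $z_{y_i}=x_i$ for $1\le i<k$, $z_{y_0}=x_k$, and $y_k\in X_2$ is free. (Indeed, internal $y_i$ cannot be free by the shortest-path hypothesis; for the resulting rainbow condition on $I\triangle P$ to hold one then forces $z_{y_0}\in\{x_1,\ldots,x_k\}$, and distinctness of the classes of $y_0,y_1,\ldots,y_{k-1}$ pins $z_{y_0}=x_k$.) Consequently, the augmented set $I':=(I\setminus\{x_1,\ldots,x_k\})\cup\{y_0,\ldots,y_k\}$ is automatically a partial transversal, and its common-independence reduces to $I'\in\mathcal{I}_1$, which is tested by the \emph{single} query $\cO(I')$.

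To locate such a path efficiently, I would perform a BFS-like search on the auxiliary directed graph whose vertices are the classes represented in $I$ together with a sink node for free classes, using the class-mate and $M_1$-exchange information already gathered; each candidate path of the prescribed cyclic-substitution form is verified through a single $\cO$-query on the associated $I'$. A Cunningham-style shortest-augmenting-path analysis, adapted to our partial-information setting, keeps the number of candidates per iteration polynomial in $|E|$. Since each successful augmentation strictly increases $|I|$, the outer loop terminates after at most $|E|$ iterations, yielding a strongly polynomial-time algorithm in the common independence oracle model; the principal technical burden is verifying that the candidate enumeration remains polynomial, which relies on the rigid cyclic structure above.
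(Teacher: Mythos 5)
The structural lemma you state (that a shortest augmenting path must have the rigid class-mate form $z_{y_i}=x_i$ for $1\le i<k$, $z_{y_0}=x_k$, with $y_k$ free) is correct, and the observation that the resulting candidate set $I'$ is automatically a transversal and can be validated with a single query $\cO(I')$ is a genuine insight that the paper also exploits. However, the lemma by itself does not \emph{bypass} the obstacle you identified; it only tells you what a shortest augmenting path must look like, not how to find one, and here the argument has a real gap. The ``BFS-like search on the auxiliary directed graph'' requires, at each extension step, an $M_1$-exchange arc $y_i\to x_{i+1}$ from a covered $y_i$ to an $x_{i+1}\neq z_{y_i}$ --- exactly the arcs you acknowledged are invisible to $\cO$ (the query value $0$ is forced by the $M_2$-condition). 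The only $M_1$-arcs you claim to ``gather'' are those leaving free elements, and by your own structural lemma \emph{none} of the $M_1$-arcs on an internal segment of a shortest path leaves a free element. So the auxiliary graph cannot be constructed, and enumerating complete candidate paths consistent with the cyclic structure and verifying each with one query is, in the worst case, exponential; the invoked ``Cunningham-style analysis'' is asserted but not substantiated.

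The paper fills precisely this hole. Its \textsc{EmulatingBFS} does not try to recover individual arcs of the partition-side or the general-side matroid; instead it grows partial paths $P_{y'}$ (whose symmetric difference with $I$ is, by the partition structure, always independent in the partition matroid) and certifies each two-step extension $P_{y'}+x+y$ by a single query $\cO(I\triangle(P_{y'}+x+y))$, combined with the class-membership test $\{y',x\}\notin\cI_1\cap\cI_2$. The non-trivial content --- entirely absent from your proposal --- is the pair of equivalences (Claims~\ref{claim:partition1} and~\ref{claim:partition2}) showing that these oracle answers correctly characterize the exchange arcs $A'_1[I]$ and $A'_2[I]$; the proof of these equivalences hinges on Brualdi's lemma and the Unique Perfect Matching Lemma together with the shortest-path invariant maintained by the BFS. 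Without such a certification lemma, the step from ``a candidate $I'$ is common independent'' to ``the corresponding sequence is a valid shortest path in $D'[I]$'' is unjustified, and the algorithm has no provable running-time bound. To repair the proposal you would need to prove an analogue of these two claims and replace the ill-defined auxiliary-graph search with a prefix-verifying BFS in the reverse direction (starting from the free/sink side, so that the partial symmetric differences are common independent sets).
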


Although the complexity of the problem in general remains an intriguing open question even for the unweighted setting, this seemingly simple case already includes matchings in bipartite graphs and arborescences.

Recently, Joswig and Schr\"oter~\cite{joswig2017matroids} introduced the notion of split matroids, a class with distinguished structural properties that generalizes paving matroids.
B\'erczi, Kir\'aly, Schwarcz, Yamaguchi and Yokoi~\cite{berczi2023hypergraph} showed that every split matroid can be obtained as the direct sum of a so-called elementary split matroid and uniform matroids, and provided a hypergraph characterization of elementary split matroids.
By relying on this characterization, we show that even weighted matroid intersection is tractable in the common independence oracle model when one of the matroids is from this class. 

\begin{theorem}
\label{thm:split}
There exists a strongly polynomial-time algorithm for the weighted matroid intersection problem in the common independence oracle model when one of the matroids is an elementary split matroid.
\end{theorem}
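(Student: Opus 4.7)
The plan is to exploit the hypergraph characterization of elementary split matroids established in~\cite{berczi2023hypergraph}: an elementary split matroid $M_1$ on ground set $E$ with rank $r$ is entirely described by a collection of hyperedges $\cH = \{H_1,\dots,H_q\}$ with associated bounds $r_1,\dots,r_q$, in the sense that a set $S$ is independent in $M_1$ if and only if $|S|\le r$ and $|S\cap H_i|\le r_i$ for every $i$; the overlap conditions needed for these inequalities to define a matroid keep $q$ polynomial in $|E|$.

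The first and most delicate step is to reconstruct $r$ and $\cH$ from common independence queries alone. Starting from a maximum common independent set, I would probe the $M_1$-fundamental circuits of outside elements by testing common independence of small swaps and suitably enlarged candidate sets; leveraging the overlap bounds on $\cH$, the structural fact that every $M_1$-circuit is either confined to a single hyperedge or has size $r+1$ and is not contained in any $H_i$ lets us distinguish $M_1$- from $M_2$-dependencies and thereby localise each $H_i$ together with its bound $r_i$ using polynomially many queries.

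Once $\cH$ is explicitly known, $M_1$-independence of any set can be decided directly from the cardinality and hyperedge inequalities. Combining this check with the common independence oracle yields a faithful $M_2$-independence oracle on $M_1$-independent inputs: for $S\in\cI(M_1)$, $S$ lies in $\cI(M_2)$ if and only if the common independence oracle accepts $S$.

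With these derived oracles at hand, I would run a standard strongly polynomial weighted matroid intersection algorithm such as Frank's weight-splitting procedure. Its augmenting-path subroutine constructs an exchange graph on the current common independent set $I$, whose arcs correspond to single-element swaps $I-y+x$, all of the same size as $I$; the $M_1$-arcs are verified directly against $\cH$, while every $M_2$-arc query can be restricted to a swap that is already $M_1$-independent by construction, so the common independence oracle answers it correctly. The main obstacle is the reconstruction phase: separating $M_1$- from $M_2$-dependencies with only a common independence oracle is nontrivial, and this is where the elementary split hypothesis is used in an essential way, through its tight control on the shape of $M_1$-circuits.
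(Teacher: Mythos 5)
Your plan diverges fundamentally from the paper's, and it has gaps that I don't believe can be repaired as stated.

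First, the premise that ``the overlap conditions \ref{eq:h1}--\ref{eq:h2} keep $q$ polynomial in $|E|$'' is false. Sparse paving matroids are elementary split matroids where every hyperedge $H_i$ has $|H_i|=r$ and $r_i=r-1$; condition \ref{eq:h1} then only requires pairwise intersections of size at most $r-2$, and it is classical that one can pack exponentially many such $r$-sets on $n$ elements. So explicit reconstruction of $\cH$ is already infeasible on information-theoretic grounds, before one even touches the oracle.

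Second, and independently, the common independence oracle cannot separate $\bM_1$-dependencies from $\bM_2$-dependencies. If a query set $S$ is rejected, you do not learn which matroid rejected it, and a restrictive $\bM_2$ can mask essentially all of $\bM_1$'s structure (e.g.\ if $\bM_2$ has rank $2$, you only ever get informative answers on sets of size $\le 2$, which reveals nothing about hyperedges with $r_i\ge 2$). Your structural fact about $\bM_1$-circuits being confined to one $H_i$ or of size $r+1$ does not help here, because you cannot even certify that a given rejected set is an $\bM_1$-circuit rather than an $\bM_2$-circuit. Relatedly, even if $\cH$ were handed to you, the claim that ``every $\bM_2$-arc query can be restricted to a swap that is already $\bM_1$-independent'' is not justified: the exchange graph of the standard algorithm contains arcs $(x,y)\in A_2[I]$ with $I+x-y\in\cI_2$ but $I+x-y\notin\cI_1$, and the \textsc{CI} oracle answers ``No'' on such sets without telling you whether the arc is present.

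The paper's actual argument avoids reconstruction entirely. Using Lemma~\ref{lem:tight}, it observes that a common independent set $I$ of size $<r$ is tight with respect to at most one hyperedge $H_i$, so $A_1'[I]$ is the complete bipartite set of arcs from $H_i\cap I$ to $H_i\setminus I$. Consequently any shortest cheapest $S_I$--$T_I$ path in $D'[I]$ can be shortcut to one of length at most $3$ (Claim~\ref{cl:short}), using Lemma~\ref{lem:negative_cycle} to argue the bypassed cycle has nonnegative cost. The augmentation step therefore reduces to enumerating the polynomially many candidates $I'$ of size $k+1$ with small symmetric difference from $I$ and testing each directly with a single \textsc{CI} call. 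This sidesteps both the exponential $q$ issue and the oracle-ambiguity issue, which is precisely where your plan gets stuck.
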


We will see that the maximum rank oracle does not carry too much information on its own.
However, when combined with the common independence oracle, they are strong enough to mimic our algorithm for the rank sum case.

\begin{theorem}
\label{thm:cimax}
There exists a strongly polynomial-time algorithm for the weighted matroid intersection problem when both the common independence and the maximum rank oracles are available.
\end{theorem}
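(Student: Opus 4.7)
The plan is to show that every rank sum query issued by the algorithm of Theorem~\ref{thm:ranksum} can be answered with a constant number of common independence and maximum rank queries. Granting this, Theorem~\ref{thm:cimax} follows by running the algorithm of Theorem~\ref{thm:ranksum} with each $\rsum$ call replaced by the simulation below.

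First I would inspect the algorithm behind Theorem~\ref{thm:ranksum} and catalogue the family of subsets on which it actually evaluates $\rsum$. Since that algorithm maintains a current common independent set $I$ and emulates a Bellman--Ford-style search on the exchangeability structure (without building it explicitly), the natural candidates are sets of the form $I \cup \{y\}$ for $y \in E \setminus I$ and $(I \setminus \{x\}) \cup \{y\}$ for $x \in I$, $y \in E \setminus I$. For any such set $S$ one has $r_i(S) \in \{|S|-1,\, |S|\}$ for both $i \in \{1,2\}$, because $I$ is common independent and $S$ differs from $I$ by at most one insertion and one deletion. Consequently $\rsum(S)$ can only take the values $2|S|-2$, $2|S|-1$, or $2|S|$.

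Once the queries have been localized in this way, the trichotomy can be decided by two oracle calls. The common independence oracle on $S$ returns yes exactly when $\rsum(S) = 2|S|$. Otherwise the maximum rank oracle distinguishes the remaining two cases: $\rmax(S) = |S|$ means that exactly one of $r_1(S), r_2(S)$ attains $|S|$, hence $\rsum(S) = 2|S|-1$, while $\rmax(S) = |S|-1$ forces $\rsum(S) = 2|S|-2$. The overhead per simulated rank sum call is $O(1)$ oracle queries, so the strongly polynomial running time of the underlying algorithm is preserved.

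The main obstacle will be verifying that the algorithm of Theorem~\ref{thm:ranksum} really confines its $\rsum$ queries to this family of single-element modifications of $I$. If weight splitting, dual maintenance, or the shortest-path computation happens to inspect $\rsum(S)$ for subsets $S$ that are not such modifications, the range of possible values is no longer a three-element window and the simple trichotomy breaks. In that case I would revisit the offending steps and rewrite each non-local query as a sequence of local ones along a path from $I$ to $S$ in the Hasse diagram of single-element updates, using the matroid axioms to argue that only the local information is ever needed.
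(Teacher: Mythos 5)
Your high-level plan matches the paper's proof exactly: catalogue the $\rsum$ queries the algorithm of Theorem~\ref{thm:ranksum} makes, show each is confined to a narrow family of sets where $r_i(S)\in\{|S|-1,|S|\}$, and then resolve the three-way split $\rsum(S)\in\{2|S|-2,2|S|-1,2|S|\}$ by one \textsc{CI} query (to detect $2|S|$) followed by one \textsc{Max} query. Your trichotomy logic is correct, and you rightly flag the localization step as the potential obstacle.

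The localization you propose, however, is not what actually holds. The sets on which {\sc EmulatingBellmanFord} evaluates $\rsum$ are of the form $I\triangle P_e$ or $I\triangle(P_e+e')$, where $P_e$ is a maintained path that may be arbitrarily long; these are \emph{not} single-element modifications of the starting $I$, and your argument that ``$S$ differs from $I$ by at most one insertion and one deletion'' does not apply. What saves the reduction is the invariant (Lemma~\ref{lem:invariants1}(a)) that for each $y\in I$ with $P_y\neq\nul$ the set $I\triangle P_y$ is itself common independent of size $|I|$. Consequently every queried set is either some $I'\subseteq E$ tested against $\rsum(I')=2|I'|$ (answerable by \textsc{CI} alone) or of the form $I'+x$ with $I'\in\cI_1\cap\cI_2$, and on sets of the latter form your CI-then-Max trichotomy is valid. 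The paper's proof of Theorem~\ref{thm:cimax} lists exactly these query types (a)--(d) and answers them as you describe, so once the localization is corrected to ``$I'$ or $I'+x$ with $I'$ common independent'' rather than ``single-element modifications of $I$,'' your argument coincides with the paper's.
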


\paragraph{Organization}
The rest of the paper is organized as follows.
Basic definitions and notation are introduced in Section~\ref{sec:preliminaries}, together with some fundamental results on matroid intersection.
Section~\ref{sec:oracles} describes the relation between different oracles.
We present our strongly polynomial algorithm under the rank sum oracle in Section~\ref{sec:rank_sum}.
The common independence oracle case when one of the matroids is a partition matroid or an elementary split matroid, as well as the combination of the common independence and maximum rank oracles, is discussed in Section~\ref{sec:ci}. 
\section{Preliminaries}
\label{sec:preliminaries}
For the basics on matroids and the matroid intersection problem, we refer the reader to \cite{oxley2011matroid, schrijver2003combinatorial}. Throughout the paper, for $i=1,2$, let $\bM_i=(E,\cI_i)$ be loopless\footnote{The assumption that the matroids are loopless is not restrictive as loops can be easily detected by any of the oracles considered.} matroids on the same finite ground set $E$ of size $n$, whose \textbf{independent set families}, \textbf{rank functions}, and \textbf{closure operators} are denoted by $\cI_i$, by $r_i$, and by $\cl_i$, respectively; that is, $r_i(X) = \max \left\{\, |Y| \mid Y \subseteq X,~Y \in \cI_i \,\right\}$ and $\cl_i(X) = \{\, e \in E \mid r_i(X \cup \{e\}) = r_i(X) \,\}$ for each $X \subseteq E$.
For two sets $X,Y\subseteq E$, we denote their \textbf{symmetric difference} by $X\triangle Y=(X\setminus Y)\cup(Y\setminus X)$. The \textbf{$k$-truncation} of a matroid $\bM=(S,\cI)$ is a matroid $(\bM)_{k}=(S,\cI^{\le k})$ with $\cI^{\le k}=\{\, X\in\cI \mid |X|\leq k \,\}$.
For $I \in \cI_i$ and $x \in \cl_i(I) \setminus I$, the \textbf{fundamental circuit} of $x$ with respect to $I$ in $\bM_i$ is denoted by $C_i(I, x) = \{\, y \in I \mid I + x - y \in \cI_i \,\}$. 

We consider four oracles for matroid intersection. Given a set $X\subseteq E$ as an input, a \textbf{rank sum oracle} (\textsc{\textsc{Sum}}) answers the sum $\rsum(X)\coloneqq r_1(X)+r_2(X)$ of the ranks of $X$, a \textbf{minimum rank oracle} (\textsc{Min}) answers the minimum $\rmin(X)\coloneqq\min\left\{r_1(X),r_2(X)\right\}$ of the ranks of $X$, a \textbf{maximum rank oracle} (\textsc{Max}) answers the maximum $\rmax(X)\coloneqq\max\left\{r_1(X),r_2(X)\right\}$ of the ranks of $X$, and a \textbf{common independence oracle} (\textsc{CI}) answers \emph{``Yes''} if $X\in\cI_1\cap\cI_2$ and \emph{``No''} otherwise.

Let us first overview some basic results on unweighted matroid intersection. In \cite{edmonds1970submodular}, Edmonds gave the following characterization for the maximum cardinality of a common independent set of two matroids.
\begin{theorem}[Edmonds~\cite{edmonds1970submodular}]\label{thm:Edmonds}
Given two matroids $\bM_1=(E,\cI_1)$ and $\bM_2=(E,\cI_2)$ on a common ground set $E$, the maximum cardinality of a common independent set of $\bM_1$ and $\bM_2$ is equal to
\begin{align*}
\min\left\{\, r_1(Z) + r_2(E \setminus Z) \mid Z \subseteq E \,\right\}.
\end{align*}
\end{theorem}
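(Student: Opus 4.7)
The plan is to prove the two sides of the identity separately: the inequality $\max \leq \min$ follows from a direct counting argument, while the matching lower bound is established by an augmenting-path analysis on an auxiliary exchange digraph attached to a maximum common independent set. The argument is classical and purely combinatorial; it does not depend on which oracle is used.

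For weak duality, I would note that for any $I \in \cI_1 \cap \cI_2$ and any $Z \subseteq E$, the partition $I = (I \cap Z) \cup (I \setminus Z)$ gives
\[
  |I| \;=\; |I \cap Z| + |I \setminus Z| \;\leq\; r_1(Z) + r_2(E \setminus Z),
\]
since $I \cap Z \in \cI_1$ and $I \setminus Z \in \cI_2$. Maximizing over $I$ on the left and minimizing over $Z$ on the right yields one direction.

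For the matching bound, fix a maximum common independent set $I$ and form the exchange graph $D(I)$ on vertex set $E$ with arcs $y \to x$ whenever $y \in I$, $x \in E \setminus I$, and $I - y + x \in \cI_1$, and arcs $x \to y$ whenever $I - y + x \in \cI_2$. Letting $X_i = \{\, x \in E \setminus I \mid I + x \in \cI_i \,\}$ for $i = 1, 2$, the classical Iri--Tomizawa / Aigner--Dowling augmentation lemma says that a \emph{shortest} directed $X_1$-to-$X_2$ path in $D(I)$ produces, via symmetric difference, a common independent set of size $|I| + 1$; maximality of $I$ therefore rules out any such path. I then define $Z$ to be the set of vertices from which $X_2$ is reachable in $D(I)$; then $X_2 \subseteq Z$, $X_1 \cap Z = \emptyset$, and no arc of $D(I)$ leaves $E \setminus Z$ to enter $Z$.

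It remains to verify $r_1(Z) = |I \cap Z|$ and $r_2(E \setminus Z) = |I \cap (E \setminus Z)|$; adding these reproduces $|I|$ exactly. For $x \in Z \setminus I$: since $X_1 \cap Z = \emptyset$ we have $x \in \cl_1(I)$, so $C_1(I, x)$ is defined, and each $y \in C_1(I, x)$ produces an arc $y \to x$ in $D(I)$; thus $y$ too reaches $X_2$ and lies in $Z$, giving $C_1(I, x) \subseteq I \cap Z$ and hence $x \in \cl_1(I \cap Z)$. Dually, for $x \in (E \setminus Z) \setminus I$: since $X_2 \subseteq Z$, we get $x \in \cl_2(I)$, so $C_2(I, x)$ is defined, and each $y \in C_2(I, x)$ yields an arc $x \to y$, which forces $y \in E \setminus Z$ by the no-crossing property; hence $C_2(I, x) \subseteq I \cap (E \setminus Z)$ and $x \in \cl_2(I \cap (E \setminus Z))$. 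The principal technical obstacle is the augmenting-path lemma itself: verifying that a shortest $X_1$-to-$X_2$ path in $D(I)$ really augments requires a careful unique-matching argument along the fundamental exchanges to rule out chord-induced dependencies in $\bM_1$ or $\bM_2$.
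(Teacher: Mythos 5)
The paper states Theorem~\ref{thm:Edmonds} as a cited result of Edmonds and supplies no proof of its own, so there is nothing in-paper to compare against directly. Your sketch is the standard exchange-graph proof and matches the structure underlying the paper's Algorithm~\ref{alg:1} ({\sc Augment}): either augment along a shortest $S_I$--$T_I$ path (your $X_1$--$X_2$ path), or output the certificate $Z$. The weak-duality paragraph is complete, and the strong-duality construction of $Z$ as the set of vertices that reach $T_I$ in $D[I]$, together with the closure arguments establishing $r_1(Z)=|I\cap Z|$ and $r_2(E\setminus Z)=|I\setminus Z|$, is correct as written. The one step you defer --- that a shortest $S_I$--$T_I$ path $P$ yields $I\triangle P\in\cI_1\cap\cI_2$ --- is a genuine lemma, but you flag it accurately, and the paper's preliminaries supply exactly the needed tool: minimality of $P$ forces each of the two matchings it induces inside $A_1[I]$ and $A_2[I]$ to be the unique perfect matching on the corresponding vertex subset (any further arc of $A_i[I]$ among the path's vertices would shortcut $P$), so Lemma~\ref{lem:UPM} applies in each matroid and certifies independence of $I\triangle P$; adding back the source then uses that $P$ meets $S_I$ only at its first vertex. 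Hence the outline is sound, and the gap you acknowledge closes with ingredients the paper already introduces in Section~\ref{sec:preliminaries}.
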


The notion of exchangeability graphs plays a central role in any matroid intersection algorithm.

\begin{definition}[Exchangeability Graphs]\label{def:exchange}
Assume that $I\in\cI_1\cap \cI_2$. The \textbf{exchangeability graph} corresponding to $I$ is a bipartite digraph $D[I] = (E \setminus I, I; A[I])$ defined as follows. Set
\begin{align*}
  S_I &\coloneqq \{\, s \in E \setminus I \mid I + s \in \cI_1 \,\},\\
  T_I &\coloneqq \{\, t \in E \setminus I \mid I + t \in \cI_2 \,\},
\end{align*}
where elements in $S_I$ and in $T_I$ are called \textbf{sources} and \textbf{sinks}, respectively. We then define the set $A[I] \coloneqq A_1[I] \cup A_2[I]$ of exchangeability arcs, where
\begin{align*}
  A_1[I] \coloneqq&\ \{\, (y, x) \mid x \in E \setminus I,~y \in I,~I + x - y \in \cI_1 \,\}\\
  =&\ \{\, (y, s) \mid s \in S_I,~y \in I \,\} \cup \{\, (y, x) \mid x \in E \setminus (I \cup S_I),~y \in C_1(I, x) \,\},\\[1mm]
  A_2[I] \coloneqq&\ \{\, (x, y) \mid x \in E \setminus I,~y \in I,~I + x - y \in \cI_2 \,\}\\
  =&\ \{\, (t, y) \mid t \in T_I,~y \in I \,\} \cup \{\, (x, y) \mid x \in E \setminus (I \cup T_I),~y \in C_2(I, x) \,\}.
\end{align*}
Note that $S_I$ and $A_1[I]$ depend only on $\cI_1$, and $T_I$ and $A_2[I]$ depend only on $\cI_2$.
\end{definition}

Brualdi \cite{brualdi1969comments} observed that the set $A_i[I]$ satisfies the following property for $i=1,2$.

\begin{lemma}\label{lem:UPM-inv}
	Let $I\in \cI_i$ and let $Z\subseteq E$ satisfy $|I\triangle Z|=|I|$ and $I\triangle Z\in \cI_i$. 
	Then $A_i[I]$ contains a perfect matching on $Z$
	(i.e., a set of vertex-disjoint arcs whose tails and heads constitute $Z$).
\end{lemma}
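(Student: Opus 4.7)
The plan is to verify Hall's condition on the bipartite subgraph of $A_i[I]$ connecting $I\cap Z$ and $Z\setminus I$. First, from $|I\triangle Z|=|I|$ together with $|I\triangle Z|=|I\setminus Z|+|Z\setminus I|$ and $|I|=|I\cap Z|+|I\setminus Z|$, one obtains $|I\cap Z|=|Z\setminus I|$, so $Z$ splits into these two equal-sized classes, one contained in $I$ and the other in $E\setminus I$. Every arc of $A_i[I]$ crosses between $I$ and $E\setminus I$, hence a perfect matching on $Z$ using arcs of $A_i[I]$ is exactly a bijection $\phi\colon I\cap Z\to Z\setminus I$ such that $I-y+\phi(y)\in\cI_i$ for every $y\in I\cap Z$ (for $i=1$ the corresponding arc is $(y,\phi(y))$, and for $i=2$ it is $(\phi(y),y)$).

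To invoke Hall's theorem, fix $X\subseteq I\cap Z$ and let $N(X)\subseteq Z\setminus I$ denote its neighborhood in this bipartite graph. Apply the matroid augmentation axiom in $\bM_i$ to the two independent sets $I\setminus X$ (of size $|I|-|X|$) and $I\triangle Z$ (of size $|I|$). Since $X\subseteq I\cap Z$, the set $(I\triangle Z)\setminus(I\setminus X)$ reduces to $Z\setminus I$, so repeated augmentation produces a subset $X'\subseteq Z\setminus I$ with $|X'|=|X|$ and $(I\setminus X)\cup X'\in\cI_i$.

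It remains to show $X'\subseteq N(X)$, which is the only nontrivial step and the main obstacle. Take $x'\in X'$. If $I+x'\in\cI_i$, then $I-y+x'\in\cI_i$ for every $y\in X$, so $x'\in N(X)$. Otherwise, let $C\subseteq I+x'$ be the unique circuit containing $x'$. Then $C\setminus\{x'\}\subseteq I\setminus X$ is impossible, since it would force $C\subseteq(I\setminus X)\cup X'$, contradicting the independence of $(I\setminus X)\cup X'$. Hence there exists $y\in C\cap X$, and the circuit axiom gives $I-y+x'\in\cI_i$, so again $x'\in N(X)$. Therefore $|N(X)|\ge|X'|=|X|$, Hall's theorem delivers the bijection $\phi$, and translating $\phi$ into arcs yields the claimed perfect matching on $Z$ inside $A_i[I]$.
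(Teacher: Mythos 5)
Your proof is correct. Note that the paper does not prove this lemma itself: it is stated as an observation attributed to Brualdi, so there is no ``paper's own proof'' to compare against. Your argument is a clean and self-contained Hall's-theorem proof of Brualdi's exchange result, and each step checks out: the count $|I\cap Z| = |Z\setminus I|$ follows from $|I\triangle Z| = |I|$; repeated augmentation from $I\setminus X$ toward $I\triangle Z$ indeed stays inside $(I\triangle Z)\setminus(I\setminus X) = Z\setminus I$ and produces an $X'$ of size $|X|$ with $(I\setminus X)\cup X'$ independent; and the fundamental-circuit analysis correctly shows $X'\subseteq N(X)$ (in the subcase $I+x'\in\cI_i$ the conclusion $x'\in N(X)$ tacitly requires $X\neq\emptyset$, but the case $X=\emptyset$ is vacuous for Hall's condition, so this is harmless). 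Translating the resulting bijection $\phi$ into arcs of $A_i[I]$ then yields the perfect matching. This is essentially the standard textbook route to Brualdi's lemma, and it is valid here.
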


Krogdahl~\cite{krogdahl1974combinatorial,krogdahl1976combinatorial,krogdahl1977dependence} proved a partial converse to the above lemma.
\begin{lemma}[Unique Perfect Matching Lemma]\label{lem:UPM}
	Let $I\in \cI_i$ and let $Z\subseteq E$ satisfy $|I\triangle Z|=|I|$.
	If $A_i[I]$ contains a unique perfect matching on $Z$, then $I\triangle Z\in \cI_i$.
\end{lemma}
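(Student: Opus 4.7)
The plan is to first reduce the statement to a cleaner form on a single contracted matroid $\bN$, then use the uniqueness hypothesis to extract a ``triangular'' ordering of the matching pairs, and finally induct on the matching size via a parallel-element reduction.

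Write $Y := Z \cap I$ and $X := Z \setminus I$; from $|I \triangle Z| = |I|$ one gets $|Y| = |X| =: k$. Let $\bN$ be the matroid obtained from $\bM_i$ by contracting $I \setminus Y$ and restricting to $Y \cup X$. Then $Y \in \cI(\bN)$, arcs $(y, x) \in A_i[I]$ with $y \in Y$, $x \in X$ correspond bijectively to pairs satisfying $Y - y + x \in \cI(\bN)$, and $I \triangle Z = (I \setminus Y) \cup X$ lies in $\cI_i$ iff $X \in \cI(\bN)$. Thus it suffices to prove the following cleaner statement: in a matroid $\bN$ on $Y \cup X$ with $Y$ independent and $|Y| = |X| = k$, if the bipartite graph $G$ with edges $(y, x)$ satisfying $Y - y + x \in \cI(\bN)$ has a unique perfect matching $M$, then $X \in \cI(\bN)$.

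By the classical fact that a bipartite perfect matching is unique iff the auxiliary digraph (orienting $M$-edges one way and non-$M$-edges the other) is acyclic, a topological sort lets us reindex so that $M = \{(y_j, x_j)\}_{j=1}^k$ with $Y - y_{j'} + x_j \notin \cI(\bN)$ for all $j' > j$. I then induct on $k$. The base $k = 1$ is immediate. For $k \ge 2$, the ordering condition applied with $j = 1$ and $j' = 2, \ldots, k$ forces $x_1 \in \cl_{\bN}(Y)$; combined with $Y - y_1 + x_1 \in \cI(\bN)$, this pins down the fundamental circuit $C_{\bN}(Y, x_1) = \{y_1\}$, so $\{x_1, y_1\}$ is a circuit of $\bN$, i.e., $x_1$ and $y_1$ are parallel. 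Consequently $x_1$ is a loop in $\bN / y_1$, and setting $\bN'' := (\bN / y_1) \setminus x_1$ on $(Y - y_1) \cup (X - x_1)$, the reduced bipartite graph and the triangular ordering are inherited verbatim; the inductive hypothesis yields $X - x_1 \in \cI(\bN'')$, equivalently $X - x_1 + y_1 \in \cI(\bN)$, and a final swap via the circuit $\{x_1, y_1\}$ gives $X \in \cI(\bN)$.

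The main obstacle I anticipate is the parallelism step: showing that the topological ordering extracted from uniqueness genuinely forces $\{x_1, y_1\}$ to be a two-element circuit of $\bN$. Once this is in hand, everything else---the reduction to $\bN$, the classical DAG characterization of unique bipartite matchings, and the contraction-deletion bookkeeping---is standard.
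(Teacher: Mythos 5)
The paper attributes this lemma to Krogdahl and does not supply a proof of its own, so there is nothing in the paper to compare line-by-line; I will assess your argument on its own terms. Your proof is correct, and it is essentially the standard textbook argument. The reduction to the minor $\bN = (\bM_i / (I \setminus Y))|_{Y \cup X}$ is clean and the translations ($Y \in \cI(\bN)$, arcs of $A_i[I]$ inside $Z$ correspond to exchange pairs in $\bN$, and $I \triangle Z \in \cI_i \Leftrightarrow X \in \cI(\bN)$) all hold because $I \setminus Y$ is independent. The DAG characterization of unique bipartite matchings gives the triangular indexing, and the step you flagged as the main obstacle does go through: for $k \ge 2$, the non-edges $(y_{j'}, x_1)$, $j' \ge 2$, rule out $Y + x_1 \in \cI(\bN)$, so $x_1 \in \cl_{\bN}(Y)$ and the fundamental circuit $C_{\bN}(Y, x_1)$ is well-defined; since $y \in C_{\bN}(Y, x_1) \cap Y$ iff $(y, x_1)$ is an edge, the only such $y$ is $y_1$, forcing $C_{\bN}(Y, x_1) = \{x_1, y_1\}$. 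The contraction--deletion $\bN'' = (\bN / y_1)\setminus x_1$ then inherits the exchange graph on $(Y - y_1) \cup (X - x_1)$ verbatim (a set $W$ avoiding $x_1, y_1$ is independent in $\bN''$ iff $W + y_1 \in \cI(\bN)$), and the triangular ordering already certifies uniqueness of the residual matching, so the inductive hypothesis applies; the final swap via the $2$-circuit $\{x_1, y_1\}$ is valid since $X - x_1 + y_1 \in \cI(\bN)$ and $X - x_1$ avoids both parallel elements. No gaps.
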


Finally, let us recall that a standard algorithm for finding a maximum-cardinality common independent set is driven by the following subroutine, Algorithm~\ref{alg:1} (see \cite[$\S$~41.2]{schrijver2003combinatorial}).

For any digraph $D=(E,A)$, a \textbf{path} in $D$ is a sequence $P=e_1e_2\cdots e_\ell$ of distinct vertices such that $(e_i,e_{i+1})\in A$ for each $i=1,2,\dots,\ell-1$; we call $P$ an \textbf{$e_1$--$e_\ell$ path} or \textbf{an $X$--$Y$ path} for sets $X \ni e_1$ and $Y \ni e_\ell$ to emphasize the end vertices, and define $\ell$ as the \textbf{length}.
A \textbf{cycle} in $D$ is a sequence $e_1e_2\cdots e_\ell e_1$ such that $e_1e_2 \cdots e_\ell$ is a path and $(e_\ell, e_1)\in A$.
We often identify a path or a cycle with its vertex set $\{e_1, e_2,\dots,e_\ell\}$.

\begin{algorithm2e}[ht]
\caption{{{\sc Augment}$[E, \cI_1, \cI_2, I]$}} \label{alg:1}
\SetAlgoLined
\SetKwInOut{Input}{Input}\SetKwInOut{Output}{Output}
\Input{A finite set $E$, oracle access to $\cI_1$ and $\cI_2$, and a common independent set $I \in \cI_1 \cap \cI_2$.}
\Output{A common independent set $J \in \cI_1 \cap \cI_2$ with $|J| = |I| + 1$ if one exists, or a subset $Z \subseteq E$ with $r_1(Z) + r_2(E \setminus Z) = |I|$ otherwise.}
\BlankLine

Construct the exchangeability graph $D[I]$ with source set $S_I$ and sink set $T_I$.

If some $t \in T_I$ is reachable from some $s \in S_I$, then find a shortest $S_I$--$T_I$ path $P$ in $D[I]$, and return $J = I \triangle P$. Otherwise, return $Z = \{\, e \in E \mid e~\text{can reach some}~t \in T_I~\text{in}~D[I] \,\}$.
\end{algorithm2e}

Now we turn to the weighted setting. For a weight function $w \colon E \to \RR$ and a subset $X \subseteq E$, define $w(X) \coloneqq \sum_{e \in X} w(e)$.
For a family $\cF \subseteq 2^E$, a subset $X \subseteq E$ is \textbf{$w$-maximal in $\cF$} if $X \in \textrm{arg\,max}\left\{\, w(Y) \mid Y \in \cF \,\right\}$.
We define $\cI_i^k \coloneqq \{\, X \in \cI_i \mid |X| = k \,\}$ for $i = 1, 2$ and $k = 0, 1, \dots, n$.

One approach to solve the weighted matroid intersection problem is via augmentation along cheapest paths in the exchangeability graph as shown in Algorithm~\ref{alg:2} (see~\cite[$\S$~41.3]{schrijver2003combinatorial}), where the cost function $c \colon E \to \RR$ is defined on the vertex set as follows:
  \begin{align}\label{eq:cost}
    c(e) &\coloneqq \begin{cases}
      w(e) & \text{if $e \in I$},\\
      -w(e) & \text{if $e \in E \setminus I$}.
    \end{cases}
  \end{align}
For each path (or cycle) $P$ in $D[I]$, we define the \textbf{cost} of $P$ as $c(P) \coloneqq \sum_{e \in P}c(e)$.

\begin{algorithm2e}[t]
	\caption{{{\sc CheapestPathAugment}$[E, w, \cI_1, \cI_2, I]$}} \label{alg:2}
	\SetAlgoLined	
	\SetKwInOut{Input}{Input}\SetKwInOut{Output}{Output}
\Input{A finite set $E$, a weight function $w \colon E \to \RR$, oracle access to $\cI_1$ and $\cI_2$, and a $w$-maximal set $I \in \cI_1^k \cap \cI_2^k$ for some $k\in\{0, 1, \dots, n - 1\}$.}
\Output{A $w$-maximal set $J \in \cI_1^{k+1} \cap \cI_2^{k+1}$ if one exists, or a subset $Z \subseteq E$ with $r_1(Z) + r_2(E \setminus Z) = |I|$ otherwise.}
\BlankLine

Construct the exchangeability graph $D[I]$ with source set $S_I$ and sink set $T_I$. In addition, define the cost function $c \colon E \to \RR$ by \eqref{eq:cost}.

If some $t \in T_I$ is reachable from some $s \in S_I$, then find a shortest cheapest $S_I$--$T_I$ path $P$ in $D[I]$ (i.e., the cost $c(P)$ is minimum, and subject to this, the length of $P$ is minimum), and return $J = I \triangle P$.
Otherwise, return $Z = \{\, e \in E \mid e~\text{can reach some}~t \in T_I~\text{in}~D[I] \,\}$.
\end{algorithm2e}

The next lemma characterizes $w$-maximal common independent sets in $\cI^k_1\cap \cI^k_2$.

\begin{lemma}[cf.~{\cite[Theorem~41.5]{schrijver2003combinatorial}}]\label{lem:negative_cycle}
A common independent set $I \in \cI_1^k \cap \cI_2^k$ is $w$-maximal if and only if $D[I]$ contains no negative-cost cycle with respect to  the cost function $c$ defined as \eqref{eq:cost}.
\end{lemma}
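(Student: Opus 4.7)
Plan: the lemma is proved in contrapositive form in both directions, using Lemma~\ref{lem:UPM-inv} (matching existence) and Lemma~\ref{lem:UPM} (the Unique Perfect Matching Lemma) to translate between common independent sets in $\cI_1^k \cap \cI_2^k$ and directed cycles in $D[I]$. Throughout I use that $A_1[I]$-arcs point from $I$ to $E \setminus I$ while $A_2[I]$-arcs point the other way, so any arc set in which every vertex has in- and out-degree exactly one decomposes into vertex-disjoint directed cycles of $D[I]$.

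For the ``if'' direction, suppose $I$ is not $w$-maximal, and pick $I' \in \cI_1^k \cap \cI_2^k$ with $w(I') > w(I)$. Set $Z \coloneqq I \triangle I'$. Then $I \triangle Z = I' \in \cI_i$ and $|I \triangle Z| = k = |I|$, so Lemma~\ref{lem:UPM-inv} supplies perfect matchings $M_i \subseteq A_i[I]$ on $Z$ for $i=1,2$. The union $M_1 \cup M_2$ decomposes into directed cycles of $D[I]$ covering $Z$, and their total cost equals
\[
  c(Z) \;=\; w(I \setminus I') - w(I' \setminus I) \;=\; -\bigl(w(I') - w(I)\bigr) \;<\; 0,
\]
so at least one cycle in the decomposition is negative.

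For the ``only if'' direction, assume $D[I]$ has a negative cycle and let $C$ be one of minimum length, with ties broken lexicographically relative to a fixed total order on $E$. Since $C$ alternates between $I$ and $E \setminus I$, we have $|I \triangle C| = |I|$, and the goal is to apply Lemma~\ref{lem:UPM} to each matroid to conclude $I \triangle C \in \cI_1 \cap \cI_2$; this common independent set has weight $w(I) - c(C) > w(I)$, contradicting $w$-maximality. By symmetry it suffices to verify that $A_1[I]$ admits a unique perfect matching on $C$. Suppose otherwise that $M_1' \neq M_1$ is a second perfect matching in $A_1[I]$ on $C$, where $M_1 \subseteq A_1[I]$ and $M_2 \subseteq A_2[I]$ are the parts of $C$. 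Then $M_1' \cup M_2$ again has every vertex with in- and out-degree exactly one, hence decomposes into directed cycles $C_1, \dots, C_m$ of $D[I]$ with $\sum_j c(C_j) = c(C) < 0$.

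The main obstacle is deriving a contradiction from this last decomposition. If $m \geq 2$, every $C_j$ has length strictly smaller than $|C|$, and since the costs sum to a negative value, at least one $C_j$ is itself negative, contradicting the length-minimality of $C$. The delicate case is $m = 1$, in which $M_1' \cup M_2$ is a single Hamiltonian directed cycle $C'$ on the vertex set of $C$, so $C'$ has the same length and cost as $C$ but a different arc set; this is ruled out by the lexicographic tie-breaking, as $C$ and $C'$ would then be two distinct minimizers of the same profile, contradicting the canonical choice of $C$.
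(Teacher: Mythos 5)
The ``if'' direction (contrapositive, via Lemma~\ref{lem:UPM-inv} and the cycle decomposition) is fine, and so is the structure of the ``only if'' direction up to the $m\ge 2$ case. The gap is in the $m=1$ case, and the proposed fix does not work. If $M_1'\cup M_2$ happens to be a single Hamiltonian cycle $C'$ on $V(C)$, then $C'$ is just \emph{another} minimum-length negative cycle with the same vertex set and cost; its existence in no way contradicts the canonical lexicographic choice of $C$. A tie-breaking rule selects one minimizer out of possibly many --- it does not make the others go away, and it certainly does not imply that $A_1[I]$ has a unique perfect matching on $V(C)$. So the invocation of lexicographic minimality is a non sequitur, and the proof does not establish $I\triangle C\in\cI_1\cap\cI_2$ in this case.

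The case you are waving away is not vacuous. Take $E\setminus I=\{t_1,t_2,t_3\}$, $I\cap V(C)=\{s_1,s_2,s_3\}$, with $N_1=\{(s_1,t_1),(s_2,t_2),(s_3,t_3)\}$, $N_2=\{(t_1,s_2),(t_2,s_3),(t_3,s_1)\}$, and suppose $A_1[I]$ also contains $N_1'=\{(s_1,t_2),(s_2,t_3),(s_3,t_1)\}$ (this is realizable, e.g.\ by a graphic matroid where $I$ is a star in $K_4$ and the $t_i$ form the opposite triangle). Then $N_1'\cup N_2$ is again a Hamiltonian $6$-cycle, so $m=1$. What rescues the argument here is a \emph{different} decomposition: the multiset $C\cup C'$ has in- and out-degree $2$ everywhere and can be split into cycles that each use two of the ``shared segments'' of $C$; one checks that each such cycle is strictly shorter than $C$ and that their costs sum to $2c(C)<0$, giving a shorter negative cycle and the desired contradiction. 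Something of this flavor --- a careful re-decomposition of $C\cup C'$, or an induction on a finer parameter than length --- is needed, and it is precisely the content that the ``canonical choice'' sentence papers over. (Note also that the paper itself does not prove this lemma; it cites it as Theorem~41.5 of Schrijver, so there is no in-paper proof to compare against.)
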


\section{Polynomial Reducibility of Oracles}
\label{sec:oracles}

The aim of this section is to clarify the relation between oracles for matroid intersection, which implies the relation between the problem under the restricted oracles.
For a single matroid, although there are many different types of oracles that are often used, many of these conventional oracles have the same computational power.
More precisely, we say that an oracle $\cO_1$ is \textbf{polynomially reducible} to another oracle $\cO_2$ if $\cO_1$ can be simulated by using a polynomial number of oracle calls to $\cO_2$ measured in terms of the size of the ground set.
Two oracles are \textbf{polynomially equivalent} if they are mutually polynomially reducible to each other.
In this sense, the rank, independence, strong basis, circuit-finding, spanning, and port oracles are polynomially equivalent~\cite{robinson1980computational,hausmann1981algorithmic,coullard1996independence}. 

As defined in Section~\ref{sec:preliminaries}, we consider four types of oracles for matroid intersection: \textsc{Sum}, \textsc{Min}, \textsc{Max}, and \textsc{CI}.
As it turns out, \textsc{Max} is not very useful on its own, but it provides a powerful tool when combined with any of the other three oracles.
We denote by a `+' sign when we have access to two of the oracles, e.g., \textsc{Min+Max} means that for a set $X\subseteq E$ the oracle tells both $\rmin(X)$ and $\rmax(X)$. 

In what follows, we discuss the reducibility of the oracles one by one.
In order to keep the presentation clear, some of the ideas appear multiple times.
For an overview of the results, see Figure~\ref{fig:oracles}. Observe that, by  $\rsum(X)=\rmin(X)+\rmax(X)$, any combination of at least two of \textsc{Min}, \textsc{Max}, and \textsc{Sum} is clearly equivalent.
This immediately implies that each of {\sc Min}, {\sc Sum}, and {\sc Max} is reducible to each of {\sc Min+Max}, {\sc Min+Sum}, and {\sc Sum+Max}.

\begin{figure}[t!]
\centering
\includegraphics[width=.5\linewidth]{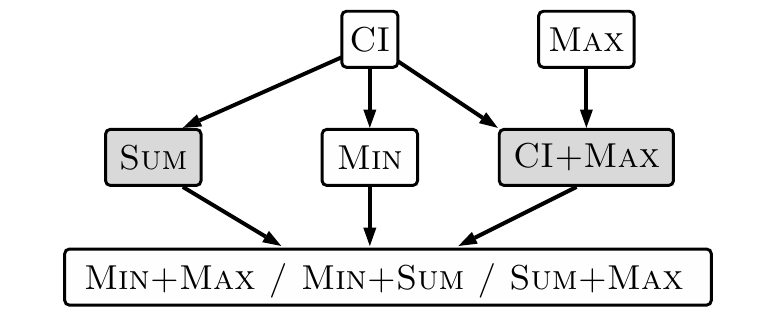}
\caption{Hierarchy of oracles, directed arcs representing polynomial reducibility. Grey boxes denote oracles for which strongly polynomial time algorithms are given in the present paper.}
\label{fig:oracles}
\end{figure}

\begin{theorem}\label{thm:or1} 
\textsc{CI} is not polynomially reducible to \textsc{Max}, but it is polynomially reducible to \textsc{Min} and \textsc{Sum}.
\end{theorem}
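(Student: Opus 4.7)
The plan is to treat the three assertions separately, since the positive parts are trivial one-query reductions while the negative part requires a simple indistinguishability argument.

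For the reductions to \textsc{Min} and \textsc{Sum}, the key observation is that $r_i(X) \le |X|$ always, and $X \in \cI_1 \cap \cI_2$ if and only if $r_1(X) = r_2(X) = |X|$. Hence $X$ is commonly independent exactly when $\rmin(X) = |X|$, and equivalently when $\rsum(X) = 2|X|$. A single query to \textsc{Min} (resp.\ \textsc{Sum}) therefore simulates a call to \textsc{CI}, so both reductions are immediate.

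For the non-reducibility of \textsc{CI} to \textsc{Max}, the plan is an adversarial/indistinguishability argument: exhibit a family of matroid pairs that a \textsc{Max} oracle cannot tell apart, yet whose \textsc{CI} answers differ. Take $\bM_1$ to be the free matroid $U_{n,n}$ on $E$ and let $\bM_2$ range over $U_{k,n}$ for $k \in \{1, 2, \dots, n\}$ (all loopless). For every $X \subseteq E$ we have $r_1(X) = |X|$ and $r_2(X) = \min\{|X|, k\}$, so
\begin{equation*}
\rmax(X) \;=\; \max\bigl\{|X|,\ \min\{|X|, k\}\bigr\} \;=\; |X|,
\end{equation*}
independently of $k$. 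Consequently, any algorithm that issues only \textsc{Max} queries receives the same response sequence for all $n$ choices of $k$ and therefore outputs the same value on every query. On the other hand, querying \textsc{CI}$(E)$ should return \emph{Yes} precisely when $k = n$, so no algorithm can be correct for all $k$. This contradicts polynomial (indeed, any) reducibility of \textsc{CI} to \textsc{Max}.

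The only potential subtlety is the looplessness convention assumed throughout the paper: excluding $k = 0$ still leaves $n$ indistinguishable instances, which is more than enough for the argument. I do not foresee any real obstacle; the entire proof should fit in a short paragraph once the uniform-matroid example is pinned down.
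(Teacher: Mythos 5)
Your proof is correct and takes essentially the same approach as the paper: the reductions to \textsc{Min} and \textsc{Sum} use the identical subcardinality observation that $X \in \cI_1 \cap \cI_2$ iff $\rmin(X)=|X|$ iff $\rsum(X)=2|X|$, and the non-reducibility to \textsc{Max} fixes $\bM_1$ to be the free matroid so that $\rmax(X)=|X|$ is independent of $\bM_2$. The only difference is that you make the indistinguishability explicit by exhibiting the family $\bM_2 = U_{k,n}$, whereas the paper leaves the choice of varying $\bM_2$ implicit.
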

\begin{proof}
If $\bM_1$ is the free matroid, then \textsc{Max} always answers $|X|$ independently from the choice of $\bM_2$.
Thus deciding if $X\subseteq E$ is a common independent set or not is impossible relying solely on \textsc{Max}.

To see the second half, observe that for a set $X\subseteq E$, \textsc{CI} answers \emph{``Yes''} if and only if $X$ is a common independent set of the two matroids, that is, $r_1(X)=r_2(X)=|X|$.
By the subcardinality of the rank functions, this is equivalent to $\rmin(X)=|X|$ and to $\rsum(X)=2|X|$.
As these conditions can be checked by \textsc{Min} and \textsc{Sum}, respectively, the theorem follows.
\end{proof}

\begin{theorem}\label{thm:or2} 
\textsc{Min} is not polynomially reducible to \textsc{Sum}, \textsc{CI}, \textsc{Max}, and \textsc{CI+Max}.
\end{theorem}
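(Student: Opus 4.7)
The statement asserts four non-reducibilities, but the list is redundant: since \textsc{CI+Max} gives at least as much information as \textsc{CI} and at least as much as \textsc{Max} in isolation, any reduction of \textsc{Min} to one of the latter oracles would automatically yield one to \textsc{CI+Max}. It thus suffices to prove non-reducibility to \textsc{Sum} and to \textsc{CI+Max}.

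In each of these two cases I would use an \emph{indistinguishability} argument along the lines of the proof of Theorem~\ref{thm:or1}. Concretely, I would construct two loopless matroid pairs $(\bM_1, \bM_2)$ and $(\bM_1', \bM_2')$ on a common ground set $E$ such that the target oracle $\mathcal{O}$ returns the same value on every $X \subseteq E$ in both pairs, while $\rmin(X^*)$ takes different values on the two pairs for some $X^* \subseteq E$. Any algorithm that only queries $\mathcal{O}$ will see identical answers across the two instances and cannot correctly output $\rmin(X^*)$, ruling out not merely polynomial but any finite-query reduction.

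For \textsc{CI+Max}, I would keep $\bM_1$ fixed across the two pairs and vary $\bM_2$. By choosing $\bM_2, \bM_2'$ with $r_2, r_2' \le r_1$ pointwise, $\rmax$ is fixed (equal to $r_1$); by also arranging that $\cI_1 \cap \cI_2 = \cI_1 \cap \cI_2'$, \textsc{CI} sees no difference either. Rank-function changes between $\bM_2$ and $\bM_2'$ occurring on sets that lie outside $\cI_1$ are thus invisible to both oracles, yet they shift $\rmin$. A concrete construction is to take $\bM_1$ of rank $k$ and to let $\bM_2, \bM_2'$ be two distinct loopless matroids of rank at most $k$ agreeing on their independent sets of size at most $k$ but differing on some larger set $X^*$ with $|X^*| > k$.

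For \textsc{Sum}, the relation $\rsum = \rmin + \rmax$ forces the two pairs to differ on $\rmax$ as well as on $\rmin$, so both individual ranks must be perturbed in a coordinated way. Uniform matroid pairs are too rigid for this purpose---their sum rank function typically determines both matroids uniquely---so one must resort to non-uniform matroids. A clean route is to use duality: for any matroid $\bM$,
\[
r_{\bM}(X) + r_{\bM^*}(X) = |X| + r_{\bM}(X) + r_{\bM}(E \setminus X) - r_{\bM}(E)
\]
depends on $\bM$ only through its total rank and its connectivity function. Two non-isomorphic loopless matroids $\bM \neq \bN$ of the same rank sharing the same connectivity function would then yield pairs $(\bM, \bM^*)$ and $(\bN, \bN^*)$ indistinguishable by \textsc{Sum}, and with appropriate choices their min ranks differ. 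The principal obstacle lies here: one must exhibit such a pair of matroids explicitly and verify the matroid axioms for the construction, whereas the \textsc{CI+Max} construction is direct and requires little matroid-theoretic effort.
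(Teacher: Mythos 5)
Your high-level strategy---an indistinguishability argument that exhibits two matroid pairs agreeing on all answers of the target oracle but disagreeing on $\rmin$---is exactly the one the paper uses, and your observation that it suffices to treat \textsc{Sum} and \textsc{CI+Max} (since \textsc{CI} and \textsc{Max} are each dominated by \textsc{CI+Max}) is correct. However, the proposal stops short of being a proof: in both cases you describe the \emph{shape} of the desired witness pair but never exhibit one, and you yourself flag the \textsc{Sum} case as the ``principal obstacle.'' That obstacle is the whole point. For \textsc{Sum}, the duality idea is attractive, but it requires two distinct loopless matroids of equal rank with the same connectivity function \emph{and} a set on which the resulting min-ranks differ (as your own computation of $\min(a,s-a)$ versus $\min(b,s-b)$ shows, having the same rank sum does not automatically force the minima to differ), and no such pair is produced or verified. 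For \textsc{CI+Max}, the ``fix $\bM_1$, vary $\bM_2$ with $r_2 \le r_1$ pointwise'' plan has its own tension: if $\bM_1$ is too rich (e.g.\ free or uniform of rank $\ge \operatorname{rk}\bM_2$), then $\cI_1 \cap \cI_2 = \cI_2$ and the \textsc{CI} oracle pins down $\bM_2$ entirely, so a careful choice of $\bM_1$ is needed---again, not supplied.

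The paper takes a more economical route. It fixes a single four-element ground set $E=\{a,b,c,d\}$ and two explicit pairs of graphic matroids $(\bM_1,\bM_2)$ and $(\bM'_1,\bM'_2)$ (from Figure~\ref{fig:or2}) on which \textsc{Sum} and \textsc{CI} return identical answers for every $X\subseteq E$ yet $\rmin(E)$ is $2$ in one pair and $3$ in the other; this settles \textsc{Sum} and \textsc{CI} simultaneously (and, as you note, \textsc{Sum} alone would suffice for \textsc{CI}). For \textsc{CI+Max} it does not design a new construction from scratch but simply replaces the matroids by their $3$-truncations: truncation leaves the \textsc{CI} answers unchanged and forces the \textsc{Max} answers of the two instances to coincide, while $\rmin(E)$ still differs. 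This ``truncate the same example'' trick is the step your plan is missing, and it is what makes a tiny, fully explicit witness do all the work at once. To repair your write-up, you would either need to carry out the duality construction concretely (exhibit the two matroids, verify equal connectivity functions, and check that the min-ranks differ on a specific set) or, more cheaply, adopt the paper's strategy of a small explicit example followed by truncation.
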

\begin{proof}
We define two instances of the matroid intersection problem on the same ground set $E=\{a,b,c,d\}$ as follows.
For $i=1,2$, let $\bM_i$ be the graphic matroid of the graph $G_i$ on Figure~\ref{fig:m1m2a}, and let $\bM'_i$ be the graphic matroid of the graph $G'_i$ on Figure~\ref{fig:m1m2b}.
Consider the maximum-cardinality common independent set problem for $\bM_1$ and $\bM_2$, and for $\bM'_1$ and $\bM'_2$.
For any subset $X$ of $E$, both \textsc{Sum} and \textsc{CI} give the same answer in the two instances, thus it is not possible to distinguish them from each other using one of these oracles.
However, $\rmin(E)$ is $2$ in one of them while $3$ in the other one, showing that \textsc{Min} cannot be reduced to \textsc{Sum} or \textsc{CI}.

Now take the $3$-truncation of these graphic matroids, and define $\bN_1=(\bM_1)_3$, $\bN_2=(\bM_2)_3$, $\bN'_1=(\bM'_1)_3$, and $\bN'_2=(\bM'_2)_3$.
Consider the maximum-cardinality common independent set problem for $\bN_1$ and $\bN_2$, and for $\bN'_1$ and $\bN'_2$.
By the slight change in the definitions, both \textsc{CI} and \textsc{Max} give the same answer in the two instances for any subset $X\subseteq E$, thus it is not possible to distinguish them from each other using a combination of these two oracles.
However, $\rmin(E)$ is $2$ in one of them while $3$ in the other one, showing that \textsc{Min} cannot be reduced to \textsc{CI+Max}.

The case when $\bM_1$ is the free matroid shows again that $\rmin(X)$ cannot be determined relying solely on \textsc{Max}.
\end{proof}

\begin{figure}[t!]
\centering
\begin{subfigure}[t]{0.48\textwidth}
  \centering
  \includegraphics[width=.8\linewidth]{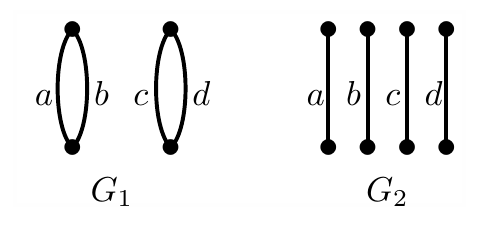}
  \caption{The graphs $G_1$ and $G_2$ defining $\bM_1$ and $\bM_2$.}
  \label{fig:m1m2a}
\end{subfigure}\hfill
\begin{subfigure}[t]{0.48\textwidth}
  \centering
  \includegraphics[width=.8\linewidth]{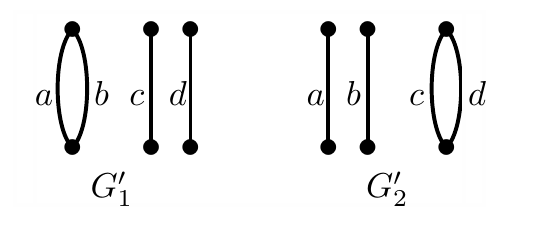}
  \caption{The graphs $G'_1$ and $G'_2$ defining $\bM'_1$ and $\bM'_2$.}
  \label{fig:m1m2b}
\end{subfigure}
\caption{Illustration of Theorems~\ref{thm:or2}, \ref{thm:or3}, and \ref{thm:or4}.}
\label{fig:or2}
\end{figure}

\begin{theorem}\label{thm:or3} 
\textsc{Sum} is not polynomially reducible to \textsc{Min}, \textsc{CI}, \textsc{Max}, and \textsc{CI+Max}.
\end{theorem}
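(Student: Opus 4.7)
The strategy mirrors the proof of Theorem~\ref{thm:or2}: for each oracle (or combination) that \textsc{Sum} is claimed not to reduce to, I would exhibit two matroid-intersection instances on a common ground set whose queries under the weaker oracle agree on every subset, yet whose \textsc{Sum} values disagree on some subset. Since the four sub-claims are logically separate, different constructions can witness each one; I expect to draw on small graphic matroids on $E=\{a,b,c,d\}$ and their $k$-truncations, quite possibly the very matroids $\bM_i,\bM_i',\bN_i,\bN_i'$ already introduced in Figure~\ref{fig:or2} for Theorem~\ref{thm:or2}.

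Three of the four cases are handled by comparatively simple devices. For \textsc{Max}, if $\bM_1$ is chosen to be the free matroid in both instances then $\rmax(X)=|X|$ is fixed, so $\rsum$ can be varied freely by changing $\bM_2$ (e.g., replacing it with a suitable truncation). For \textsc{Min}, note that $\rsum=\rmin+\rmax$, so it suffices to construct two instances with identical $\rmin$ but different $\rmax$; taking $\bM_1=\bM_2$ of some rank $k$ in one instance and $\bM_1$ of rank $k$, $\bM_2$ of strictly larger rank in the other yields the same pointwise minimum on every subset while raising the pointwise maximum on a carefully chosen $X$. For \textsc{CI}, I keep the modification entirely on ``excess'' subsets that are never common independent, preserving $\cI_1 \cap \cI_2$ while altering some rank values. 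In each of these three cases the verification is a direct check over all subsets of $E$.

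The crux is the \textsc{CI+Max} case, where the combined oracle pins down both the upper envelope $\rmax$ pointwise and the family $\cI_1\cap\cI_2$ pointwise, so any discrepancy in \textsc{Sum} must come from a variation in the smaller rank $\rmin$ and must live on a subset too large to be common independent. My plan is to construct two instances $(\bM_1,\bM_2)$ and $(\bM_1',\bM_2')$ on $E=\{a,b,c,d\}$ satisfying (i)~$\cI_1 \cap \cI_2 = \cI_1' \cap \cI_2'$; (ii)~$\max(r_1,r_2)(X)=\max(r_1',r_2')(X)$ for every $X\subseteq E$; and (iii)~$\min(r_1,r_2)(X)\neq\min(r_1',r_2')(X)$ for some specific $X$, which then forces $\rsum(X)\neq\rsum'(X)$. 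A natural template is to choose $\bM_2=\bM_2'$ dominating the first matroid in rank pointwise, and to let $\bM_1,\bM_1'$ be two small matroids whose rank functions coincide on all subsets of cardinality at most the maximum common-independent-set size but differ on a single larger subset---mirroring the $(G_i)$ versus $(G_i')$ edge swap of Figure~\ref{fig:or2}, possibly after a $k$-truncation as in Theorem~\ref{thm:or2} to tame $\rmax$. The main obstacle is arranging simultaneously that the \textsc{Max} values coincide, the family of common independent sets is preserved, and yet $\rmin$ strictly changes on a prescribed large subset; once such a pair is found, verification reduces to a finite case analysis over the $2^4=16$ subsets of $E$.
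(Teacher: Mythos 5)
Your overall strategy — exhibiting pairs of matroid-intersection instances that are indistinguishable under the weaker oracle yet differ in $\rsum$ on some subset — is exactly what the paper does, and your treatments of the \textsc{Max}, \textsc{Min}, and \textsc{CI} sub-cases are sound (the paper uses a uniform rank-$1$ matroid for \textsc{Min} and \textsc{CI}, which is simpler but not essentially different from your ``dominated $\bM_1$'' idea). The same truncated-graphic-matroid pair from Figure~\ref{fig:or2} is indeed what the paper reuses for the combined oracle.

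However, the specific template you propose for the \textsc{CI+Max} case cannot be made to work. If $\bM_2=\bM_2'$ and $r_2\ge r_1$ pointwise, then every $\bM_1$-independent set is $\bM_2$-independent, so $\cI_1\cap\cI_2=\cI_1$. Moreover, \textsc{Max}-agreement forces $\max(r_1',r_2)=\max(r_1,r_2)=r_2$, hence $r_1'\le r_2$ pointwise as well, so $\cI_1'\cap\cI_2'=\cI_1'$. Then \textsc{CI}-agreement means $\cI_1=\cI_1'$, which forces $\bM_1=\bM_1'$ (a matroid is determined by its independent sets), hence $r_1=r_1'$ and $\rsum=\rsum'$ everywhere — the two instances collapse to one. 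So any construction of the type you describe is self-defeating. The paper avoids this by letting \emph{both} matroids change between the two instances: in the truncated pair $(\bN_1,\bN_2)$ versus $(\bN_1',\bN_2')$ the ranks ``cross'' so that the pointwise maximum stays fixed while the pointwise minimum (and hence the sum) shifts. You would need to abandon the ``fix $\bM_2$ and dominate'' template and instead search for such a crossing pair; once you do, the rest of your plan goes through.
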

\begin{proof}
If $\bM_1$ is a uniform matroid of rank $1$, then $\rmin(X)=0$ if $X=\emptyset$ and $1$ otherwise, while \textsc{CI} answers \emph{``Yes''} if $|X|\leq 1$ and \emph{``No''} otherwise.
These answers are independent from the choice of $\bM_2$, therefore we cannot determine $\rsum(X)$ with their help.

The case when $\bM_1$ is the free matroid shows again that $\rsum(X)$ cannot be determined relying solely on \textsc{Max}.

Consider the same two instances of the matroid intersection problem defined by the $3$-truncations of the graphic matroids on Figure~\ref{fig:or2} as in the proof of Theorem~\ref{thm:or2}.
Recall that both \textsc{CI} and \textsc{Max} give the same answer in the two instances for any subset $X\subseteq E$, thus it is not possible to distinguish them from each other using a combination of these two oracles.
However, $\rsum(E)$ is $5$ in one of them while $6$ in the other one, showing that \textsc{Sum} cannot be reduced to \textsc{CI+Max}.
\end{proof}

\begin{theorem}\label{thm:or4} 
\textsc{Max} is not polynomially reducible to \textsc{Min}, \textsc{Sum}, and \textsc{CI}.
\end{theorem}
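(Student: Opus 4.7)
The plan is to establish the three non-reducibilities separately; for each, I will exhibit two pairs of matroids $(\bM_1, \bM_2)$ and $(\bM'_1, \bM'_2)$ on a common ground set that produce identical transcripts under the weaker oracle but differ in $\rmax$ on some subset, so no polynomial-call simulation is possible.

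First I would dispatch the \textsc{Min} and \textsc{CI} cases with a single construction. The key observation is that if $\bM_1$ is chosen to be a uniform matroid of rank $1$, then $r_1(X) = \min(|X|, 1)$, and by looplessness of $\bM_2$ we have $r_2(X) \geq \min(|X|, 1)$ for \emph{any} choice of $\bM_2$, so $\rmin(X) = \min(|X|, 1)$ depends on $|X|$ alone. For the same reason $\cI_1 \cap \cI_2 = \{\, X \subseteq E \mid |X| \leq 1 \,\}$, so \textsc{CI} is also blind to $\bM_2$. Taking $\bM_2$ to be the free matroid in one instance and a rank-$1$ uniform matroid in the other then yields identical \textsc{Min} and \textsc{CI} answers on every subset, while $\rmax(E)$ equals $|E|$ in the first instance and $1$ in the second, as soon as $|E| \geq 2$.

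For the \textsc{Sum} case I would simply reuse the two pairs of graphic matroids from the proof of Theorem~\ref{thm:or2} (Figures~\ref{fig:m1m2a} and~\ref{fig:m1m2b}). It was already verified there that $\rsum(X)$ coincides between the two instances for every $X \subseteq E$, while $\rmin(E)$ differs by one. Since $\rmax(X) = \rsum(X) - \rmin(X)$, it is then immediate that $\rmax(E)$ also differs between the two instances, which proves that \textsc{Max} cannot be simulated using \textsc{Sum} alone.

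The main obstacle is really only the \textsc{Sum} case, because \textsc{Sum} probes both matroids nontrivially and so cannot be blinded by shrinking one matroid to rank $1$; however, this obstacle is already resolved by invoking the construction of Theorem~\ref{thm:or2} rather than building a new one. The \textsc{Min} and \textsc{CI} cases are essentially immediate from the fact that these two oracles depend only on whichever matroid has the smaller rank, so a rank-$1$ choice of $\bM_1$ suffices to hide $\bM_2$ entirely.
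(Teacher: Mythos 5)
Your proposal is correct and takes essentially the same approach as the paper: for \textsc{Min} and \textsc{CI} the paper likewise fixes $\bM_1$ to be a rank-$1$ uniform matroid (so both oracles become blind to $\bM_2$), and for \textsc{Sum} it likewise reuses the two pairs of graphic matroids from Theorem~\ref{thm:or2}, noting that $\rmax(E)$ differs ($4$ versus $3$) while $\rsum$ agrees everywhere.
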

\begin{proof}
The case when $\bM_1$ is a uniform matroid of rank $1$ shows again that $\rmax(X)$ cannot be determined relying solely on \textsc{Min} or \textsc{CI}.

Consider the same two instances of the matroid intersection problem defined by Figure~\ref{fig:or2} as in the proof of Theorem~\ref{thm:or2}.
Recall that for any subset $X$ of $E$, \textsc{Sum} gives the same answer in both instances, thus it is not possible to distinguish them from each other using \textsc{Sum}.
However, $\rmax(E)$ is $4$ in one of them while $3$ in the other one, showing that \textsc{Max} cannot be reduced to \textsc{Sum}.
\end{proof}
\newcommand{\nul}{{\mathsf{null}}}

\section{Matroid Intersection under Rank Sum Oracle}
\label{sec:rank_sum}

In Algorithm~\ref{alg:2}, we assume that we are given the independence oracles of matroids $\bM_1$ and $\bM_2$, which are polynomially equivalent to the oracles of rank functions $r_1$ and $r_2$, respectively. In this section, we consider the solvability of the weighted matroid intersection problem when only the \emph{rank sum function} $\rsum \colon 2^E \to \ZZ_{\geq 0}$ is available, that is, for any set $X\subseteq E$ the oracle answers the value of $\rsum(X) \coloneqq r_1(X)+r_2(X)$. Note that a subset $I \subseteq E$ belongs to $\cI_1\cap \cI_2$ if and only if $\rsum(I) = 2|I|$. However, when $\rsum(I)<2|I|$, we cannot decide whether $I\in \cI_i$ or not for each $i=1,2$. 

The matroid intersection problem under this oracle model is exactly a special case of the \emph{polymatroid matching problem}, which is equivalent to the so-called \emph{matroid matching (or parity) problem}~\cite[\S~11.1]{lovasz2009matching}. While a max-min duality theorem was given by Lov\'asz~\cite{lovasz1980matroid} for the case when the matroids in question admit linear representations\footnote{For the matroid intersection considered as a special case, the two matroids are necessarily representable over the same field.}, such a good characterization is not known in general even in the matroid intersection case.

In what follows, we provide an algorithm for the weighted matroid intersection problem with the rank sum oracle. We consider emulating Algorithm~\ref{alg:2}, i.e., {\sc CheapestPathAugment}$[E, w, \cI_1, \cI_2, I]$.

\subsection{Searching a Shortest Cheapest Path}
Take any $k \in \{0, 1, \dots, n - 1\}$ and let $I$ be a $w$-maximal set $\cI_1^k \cap \cI_2^k$.
To emulate Algorithm~\ref{alg:2}, we want to find a shortest cheapest $S_I$--$T_I$ path in 
$D[I] = (E \setminus I, I; A_1[I]\cup A_2[I])$ 
with respect to vertex cost $c\colon E \to \RR$ defined in Algorithm~\ref{alg:2}.
With only the rank sum oracle, however, we cannot determine the sets $A_1[I]$, $A_2[I]$, $S_I$, and $T_I$,
and hence cannot simply emulate Algorithm~\ref{alg:2}.

We show that, despite this difficulty, 
we can compute a shortest cheapest $S_I$--$T_I$ path in $D[I]$.
We first make some observations.
Let $D'[I]=(E \setminus I, I; A'_1[I]\cup A'_2[I])$ be the subgraph of $D[I]$ obtained from $D[I]$ by
removing arcs entering $S_I$ and leaving $T_I$, i.e.,  
\begin{align*}
A'_1[I] \coloneqq{}&\{\, (y, x) \mid x \in E \setminus (I \cup S_I),~y \in C_1(I, x) \,\},\\
A'_2[I] \coloneqq{}&\{\, (x, y) \mid x \in E \setminus (I \cup T_I),~y \in C_2(I, x) \,\}.
\end{align*}
Note that each element in $S_I\cap T_I$ is an isolated vertex in $D'[I]$.
Recall that $D[I]$ has no negative cost cycle with respect to $c$ by Lemma~\ref{lem:negative_cycle}.
This fact implies that finding a shortest cheapest $S_I$--$T_I$ path in $D[I]$ is equivalent to finding one in $D'[I]$. 

\begin{lemma}\label{lem:shortest-cheapest-path}
Any shortest cheapest $S_I$--$T_I$ path in $D'[I]$ is a shortest cheapest $S_I$--$T_I$ path in $D[I]$.
\end{lemma}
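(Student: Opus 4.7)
The plan is to prove the stronger statement that every shortest cheapest $S_I$--$T_I$ path in $D[I]$ uses only arcs of $D'[I]$; once this holds, the lemma follows routinely, because $D'[I]$ is a subgraph of $D[I]$, so the minimum cost and the minimum length among cheapest paths must agree in both graphs.

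First, I would identify $D[I] \setminus D'[I]$ precisely: it consists of the arcs $(y, s)$ with $y \in I$ and $s \in S_I$ (entering a source), together with the arcs $(t, y)$ with $t \in T_I$ and $y \in I$ (leaving a sink). Suppose for contradiction that a shortest cheapest path $P = v_1 v_2 \cdots v_\ell$ in $D[I]$ traverses such an extra arc. Consider first the case in which $P$ enters an intermediate source $v_i = s' \in S_I$ through the arc $(v_{i-1}, v_i)$, where necessarily $i \ge 3$ is odd (by the bipartite alternation). The key trick is to close the prefix $v_1 v_2 \cdots v_{i-1}$ into a cycle using the arc $(v_{i-1}, v_1)$; this arc belongs to $A_1[I]$ because $v_1 \in S_I$ and, by the very definition of $A_1[I]$, every $y \in I$ is connected to every source. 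By Lemma~\ref{lem:negative_cycle}, the resulting cycle $C$ satisfies $c(C) \ge 0$, so the suffix $P' = v_i v_{i+1} \cdots v_\ell$ is still an $S_I$--$T_I$ path in $D[I]$ with $c(P') = c(P) - c(C) \le c(P)$ and $|P'| < |P|$, contradicting either cheapness or shortness of $P$. The symmetric case, in which $P$ traverses an arc $(v_j, v_{j+1})$ leaving an intermediate sink $v_j \in T_I$ with $j < \ell$, is handled by forming a cycle out of the suffix $v_{j+1} \cdots v_\ell$ via the arc $(v_\ell, v_{j+1}) \in A_2[I]$ and applying the same argument.

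The main observation driving the proof is that every source and every sink is \emph{universally} connected to the $I$-side of the bipartition, so any intermediate visit to a source or sink can be short-circuited into a strictly shorter path at the expense of a subtracted cycle whose cost is nonnegative by the $w$-maximality of $I$. I do not expect any real obstacle here: once this structural observation is in hand, the rest is bookkeeping, and the final comparison of cheapest/shortest path values between $D[I]$ and $D'[I]$ is immediate from $\mathcal{P}_{D'[I]} \subseteq \mathcal{P}_{D[I]}$ together with the just-proved inclusion in the reverse direction for shortest cheapest paths.
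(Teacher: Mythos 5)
Your proposal is correct and essentially matches the paper's proof: you identify the extra arcs of $D[I]$ exactly as the paper does, close the prefix (or suffix) of $P$ into a cycle through the universal arc back to the starting source $v_1$ (or from the ending sink $v_\ell$), invoke Lemma~\ref{lem:negative_cycle} to conclude the cycle has nonnegative cost, and derive a strictly shorter, no-more-expensive $S_I$--$T_I$ path, contradicting the choice of $P$. This is the same decomposition $c(P)=c(C)+c(P')$ used in the paper, just written out a bit more explicitly.
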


\begin{proof}
It is sufficient to show that any shortest cheapest $S_I$--$T_I$ path in $D[I]$ is contained in $D'[I]$.
Suppose, to the contrary, that a shortest cheapest $S_I$--$T_I$ path $P$ uses some arc $(y,s^*)\in A_1[I]\setminus A'_1[I]$ or $(t^*,y)\in A_2[I]\setminus A'_2[I]$, and let $s \in S_I$ and $t \in T_I$ be its end vertices.
If $P$ uses $(y,s^*)\in A_1[I]\setminus A'_1[I]$, let $P(s,y)$ and $P(s^*,t)$ be the subpaths of $P$ from $s$ to $y$ and from $s^*$ to $t$, respectively.
Since $(y,s)\in A_1[I]$, the path $P(s,y)$ is extended to a cycle with the same vertex set in $D[I]$, and hence $c(P(s,y))$ is nonnegative by Lemma~\ref{lem:negative_cycle}.
Then $c(P)=c(P(s,y))+c(P(s^*,t))\geq c(P(s^*,t))$ and $|P(s^*,t)|<|P|$, which contradicts that $P$ is a shortest cheapest $S_I$--$T_I$ path in $D[I]$.
If $P$ uses $(t^*,y)\in A_2[I]\setminus A'_2[I]$, we can similarly show that $P(s,t^*)$ is an $S_I$--$T_I$ path that is at least as cheap as $P$ and shorter than $P$.
\end{proof}

While we cannot determine $S_I$ and $T_I$, we can determine $S_I\cup T_I$ as $S_I\cup T_I=\{\,s\in E\setminus I\mid \rsum(I+s)\geq 2|I|+1\,\}$. We now provide a search algorithm with the rank sum oracle.
Its description is given as Algorithm~\ref{alg:3}.
For any $s\in S_I$ (resp., $s\in T_I$), it emulates the Bellman--Ford algorithm in $D'[I]$ 
(resp., in the inverse of $D'[I]$) rooted at $s$ without knowing $D'[I]$ explicitly.
Since there is no negative cost cycle in $D'[I]$ by Lemma~\ref{lem:negative_cycle}, the algorithm finds shortest cheapest paths from $s$ to all reachable vertices, and it returns a shortest cheapest $s$--$T_I$ path (resp., $s$--$S_I$ path) if it exists.
By applying this search algorithm for all $s\in S_I\cup T_I$, we can obtain 
a shortest cheapest $S_I$--$T_I$ path in $D'[I]$, 
which is also a shortest cheapest path in $D[I]$ by Lemma~\ref{lem:shortest-cheapest-path}.

For each $e\in E$, the algorithm maintains a sequence $P_e$ of distinct elements of $E$, which is initialized by a virtual token $\nul$ with $c(\nul) \coloneqq \infty$.
We will show that $P_e$ is an $s$--$e$ path in $D'[I]$ unless $P_e = \nul$.
For a sequence $P_e$ and an element $e'\in E\setminus P_e$, we denote by $P_e+e'$ the sequence obtained by appending $e'$ to $P_e$.

\begin{algorithm2e}[ht]
	\caption{{\sc EmulatingBellmanFord}$[E, c, \rsum, I, s]$} \label{alg:3}
	\SetAlgoLined	
	\SetKwInOut{Input}{Input}\SetKwInOut{Output}{Output}
\Input{A finite set $E$, a weight function $w \colon E \to \RR$, which defines $c \colon E \to \RR$ by \eqref{eq:cost}, oracle access to $\rsum \colon 2^E \to \ZZ_{\geq 0}$, 
	a $w$-maximal set $I \in \cI_1^k\cap \cI_2^k$ for some $k \in\{ 0, 1, \dots, n - 1\}$, and $s\in S_I\cup T_I$.}
\Output{For $s\in S_I$ (resp., $s\in T_I$), a shortest cheapest $s$--$T_I$ path (resp., $s$--$S_I$ path) in $D'[I]$ (resp, in the inverse of $D'[I]$) with respect to $c$ if one exists, or a message \emph{``No''} otherwise.}
\BlankLine

Set $P_s \gets s$, and $P_e \gets \nul$ for each $e\in E-s$. 

For $\ell=1,2,\dots,n-1$, do the following.
\protect{
\begin{minipage}{0.9\textwidth}
	\begin{description}
	\item[If $\ell$ is odd:] For each $y\in I$, do the following.
        \begin{itemize}
	\item Let $x\in E\setminus I$ minimize $c(P_x)$ subject to $P_x\neq \nul$, $y\not\in P_x$, and
        \begin{align*}
        (*)\quad\rsum(I\triangle P_x)=2|I|+1,\quad\rsum(I\triangle(P_x+y))=2|I|.
        \end{align*}
	\item If $c(P_x+y)<c(P_y)$, update $P_y\gets P_x+y$.
        \end{itemize}
	\item[If $\ell$ is even:] For each $x\in E\setminus I$, do the following.
	\begin{itemize}
        \item Let $y\in I$ minimize $c(P_y)$ subject to $P_y\neq \nul$, $x\not\in P_y$, and
	\begin{align*}
        (**)\quad&[~\rsum(I+x)=2|I|,~\rsum(I\triangle (P_y+x))=2|I|+1~]~\text{or}\\
		&[~\rsum(I+x)=2|I|+1,~ \rsum(I\triangle (P_y+x))=2|I|+2~].
        \end{align*}
	\item If $c(P_y+x)<c(P_x)$, update $P_x\gets P_y+x$.
        \end{itemize}
\end{description}
\end{minipage}
}

Let $t\in E\setminus I$ minimize $c(P_t)$ subject to $\rsum(I\triangle P_t)=2|I|+2$. 
Return $P_t$ if $c(P_t)\neq\infty$, and otherwise return \emph{``No''}. 
\end{algorithm2e}

\subsection{Correctness of the Search}
The following lemma shows an important property of Algorithm~\ref{alg:3}, where we can assume 
that $s$ belongs to $S_I = \{\, s \in E \setminus I \mid I + s \in \cI_1 \,\}$ by symmetry.
(For $s\in T_I = \{\, s \in E \setminus I \mid I + s \in \cI_2 \,\}$, replace $D'[I]$ with its inverse in the arguments.)

\begin{lemma}\label{lem:invariants2}
Let $s\in S_I$ and take any $\ell=1,2,\dots,n-1$. For any $e\in E$,
just after the $\ell$th updating process of Step~2 of Algorithm~\ref{alg:3},
the sequence $P_e$ is a shortest cheapest $s$--$e$ path in $D'[I]$ subject to $|P_e|\leq \ell+1$,
where $P_e=\nul$ means that there is no such path.
\end{lemma}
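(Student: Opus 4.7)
The plan is induction on $\ell$. For the \textbf{base case} $\ell = 1$, only $P_s = s$ is non-$\nul$, so in the first (odd) iteration the only candidate $x$ for each $y \in I$ is $x = s$. Since $s \in S_I$, the equality $\rsum(I + s) = 2|I|+1$ is equivalent to $s \notin T_I$, and since $I + s - y \in \cI_1$ is automatic from $s \in S_I$ and $y \in I$, the equality $\rsum(I + s - y) = 2|I|$ is equivalent to $y \in C_2(I, s)$. Together these are precisely the defining conditions of $(s, y) \in A'_2[I]$, so $P_y$ is correctly updated to the length-$2$ path $s, y$ if this arc exists and remains $\nul$ otherwise.

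For the \textbf{inductive step}, assume the invariant through iteration $\ell - 1$. I would treat the odd case (updating $P_y$ for $y \in I$ via $(*)$); the even case via $(**)$ is handled symmetrically, the two sub-cases of $(**)$ corresponding to $x \notin S_I \cup T_I$ (path continues past $x$) versus $x \in T_I \setminus S_I$ (path reaches a sink at $x$ so $I\triangle(P_y+x)$ becomes a common independent set of size $|I|+1$). The argument splits into two halves. \emph{Soundness}: any update $P_y \gets P_x + y$ performed by the algorithm produces a valid $s$--$y$ path in $D'[I]$ of length $|P_x| + 1 \leq \ell + 1$. \emph{Completeness}: for every shortest cheapest $s$--$y$ path $P^\ast = P^\ast_{x^\ast} + y$ in $D'[I]$ of length $\ell + 1$, the current $P_{x^\ast}$ satisfies $(*)$ with $y \notin P_{x^\ast}$ and $c(P_{x^\ast}) \leq c(P^\ast) - c(y)$, so the cost-minimizing update brings $c(P_y)$ down to $c(P^\ast)$. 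Combined with the inductively preserved value of $P_y$ from iteration $\ell - 1$ and the strict $<$ in the update (which protects shorter paths of equal cost), these two halves yield the invariant.

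The \textbf{core technical claim} powering soundness is the following equivalence: if $P_x$ is (by IH) a shortest cheapest $s$--$x$ path in $D'[I]$ of odd vertex-length, then $I \triangle P_x \in \cI_1$ and $r_2(I \triangle P_x) = |I|$, hence $\rsum(I \triangle P_x) = 2|I| + 1$; and for any $y \in I \setminus P_x$, the equality $\rsum(I \triangle (P_x + y)) = 2|I|$ is equivalent to $(x, y) \in A'_2[I]$. The first part follows from the Unique Perfect Matching Lemma~\ref{lem:UPM} and Brualdi's Lemma~\ref{lem:UPM-inv} applied on the $\bM_1$ side with $I + s \in \cI_1$ as the reference base (so that $I \triangle P_x = (I+s) \triangle (P_x - s)$ with $P_x - s$ balanced): the shortest-path property of $P_x$ supplies the no-chord / uniqueness argument needed to invoke UPM, forcing $I \triangle P_x \in \cI_1$; the complementary rank $r_2(I\triangle P_x)=|I|$ is exactly the statement that $P_x$ has not yet reached $T_I$. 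For the second part, $|I \triangle (P_x + y)| = |I|$ makes $\rsum = 2|I|$ equivalent to $I \triangle (P_x + y) \in \cI_1 \cap \cI_2$, and Brualdi's Lemma delivers a perfect matching of $A_2[I]$ on $P_x + y$; the same shortest-path uniqueness forces this matching to pair $x$ with $y$, yielding $(x, y) \in A_2[I]$, while $\rsum(I \triangle P_x) = 2|I|+1$ being tight rather than $2|I|+2$ gives $x \notin T_I$, upgrading membership to $A'_2[I]$. Completeness is then a standard Bellman--Ford argument using that $D[I]$ has no negative cost cycles (Lemma~\ref{lem:negative_cycle}), so cyclical detours cannot strictly cheapen a path; together with Lemma~\ref{lem:shortest-cheapest-path}, this lets us replace $P^\ast_{x^\ast}$ by $P_{x^\ast}$ without increasing cost.

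The \textbf{main obstacle} is the soundness step: the rank-sum oracle provides only scalar information, so in principle the equalities $(*)$ could be satisfied ``accidentally'' by some set that does not correspond to an arc of $D'[I]$. It is precisely the inductive \emph{shortest-path} structure of $P_x$ in $D'[I]$ that supplies the uniqueness needed for UPM and rules out these phantom configurations. For this reason, Bellman--Ford-style layered propagation is indispensable here; a Dijkstra- or pure cost-reduction-style method would not preserve the structural information on which the UPM argument crucially depends.
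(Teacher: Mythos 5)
Your high-level outline (induction on $\ell$, a soundness and a completeness half, and the observation that Bellman--Ford layering is essential because it preserves the shortest-path structure needed for the matching arguments) matches the paper. The base case and the completeness half are also essentially right. But the ``core technical claim'' that drives your soundness half has a genuine gap, and it is exactly where the paper's proof does real work.

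You write that ``the shortest-path property of $P_x$ supplies the no-chord / uniqueness argument needed to invoke UPM,'' and later that ``the same shortest-path uniqueness forces this matching to pair $x$ with $y$.'' Neither claim is justified, and neither is what the paper proves. A shortest cheapest path in a weighted digraph can perfectly well admit chords (arcs $(e_i,e_j)$ with $j>i+1$), because a chord need not shorten or cheapen the path --- only the combination of ``cheapest, then shortest among cheapest'' is available, and that does not give you a unique perfect matching to feed to Lemma~\ref{lem:UPM}. The paper (Claims~\ref{claim:equivalence1} and~\ref{claim:equivalence2}) does not assert uniqueness at all. Instead, in the ``if'' direction it supposes $I\triangle(P_x+y)\notin\cI_2$, invokes Lemma~\ref{lem:UPM} in contrapositive form to obtain a \emph{second} perfect matching $N''_2\neq N'_2$, then superposes $N'_2$, $N''_2$, and two copies of $N_1$ into a multigraph, does a degree count, and extracts a path/cycle decomposition whose existence contradicts the induction hypothesis using the strict inequality $c(P_x+y)<c(P_y)$ and the no-negative-cycle property from Lemma~\ref{lem:negative_cycle}. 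The ``only if'' direction is structurally the same: from Lemma~\ref{lem:UPM-inv} (Brualdi) one gets \emph{some} perfect matching $N'_2$, and if it fails to contain $(x,y)$, the resulting cycle in the superposed multigraph again contradicts the inductive minimality of $P_x$ and $P_y$. This degree-counting/decomposition argument is the heart of the proof and is entirely absent from your sketch; ``uniqueness'' is not a substitute for it.

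A second, smaller issue: you propose to establish $I\triangle P_x\in\cI_1$ by applying UPM/Brualdi ``on the $\bM_1$ side with $I+s\in\cI_1$ as the reference base.'' But Lemmas~\ref{lem:UPM-inv} and~\ref{lem:UPM} are statements about $A_i[I]$, not $A_i[I+s]$, and the arcs that $P_x$ uses live in $A'_1[I]$; changing the base from $I$ to $I+s$ changes the exchangeability graph, and the arcs of $P_x$ are not automatically arcs of $A_1[I+s]$. The paper instead proves $r_1(I\triangle P_x)=|I|+1$ (and the companion facts) in Lemma~\ref{lem:invariants1} by a direct rank computation: condition $(\ast\ast)$ pins down $\rsum(I\triangle P_x)$, the invariant $(P_x-s)\cap(S_I\cup T_I)=\emptyset$ bounds $r_2(I\triangle P_x)$ from above, and the two together force $r_1(I\triangle P_x)=|I|+1$. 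No matroid-exchange lemma is needed there, and attempting to use one at a shifted base introduces exactly the kind of hidden assumption your outline glosses over.

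So the skeleton is right, but the load-bearing step --- turning the rank-sum equalities $(\ast)$, $(\ast\ast)$ into arcs of $D'[I]$ --- is not proved; ``shortest-path uniqueness'' as you invoke it is not a theorem, and the actual proof requires the path/cycle decomposition argument you did not supply.
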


\begin{proof}
We use induction on $\ell$.
Note that the statement holds if $\ell=0$. 
We show the statement for any $\ell>0$ assuming that it holds for $\ell-1$. 
We use the following two claims.

\begin{claim}\label{claim:equivalence1}
Suppose that, for any $e$, $P_e$ is a shortest cheapest $s$--$e$ path in $D'[I]$ subject to $|P_e|\leq \ell$.
Then, for any $y\in I$ and $x\in E\setminus I$ such that [$P_x\neq \nul$, $y\not\in P_x$, and $c(P_x+y)<c(P_y)$], condition $(\ast)$ holds if and only if $(x,y)\in A'_2[I]$. 
\end{claim}

\begin{claim}\label{claim:equivalence2}
Suppose that, for any $e$, $P_e$ is a shortest cheapest $s$--$e$ path in $D'[I]$ subject to $|P_e|\leq \ell$.
Then, for any $x\in E\setminus I$ and $y\in I$ such that [$P_y\neq\nul$, $x\not\in P_y$, and $c(P_y+x)<c(P_x)$], condition $(\ast\ast)$ holds if and only if $(y,x)\in A'_1[I]$. 
\end{claim}

We postpone the proofs of these claims and complete the proof of the lemma relying on them.
For any $e\in E$, let $P^{\ell-1}_e$ and $P^{\ell}_e$ be the sequence $P_e$ just after the 
$(\ell-1)$st and $\ell$th process, respectively.
By induction, $P^{\ell-1}_e$ is a shortest cheapest $s$--$e$ path in $D'[I]$ subject to $|P_e|\leq \ell$.
Let $P^*_e$ be any shortest cheapest $s$--$e$ path in $D'[I]$ subject to $|P^*_e|\leq \ell+1$.

By the ``only if'' parts of Claims~\ref{claim:equivalence1} and~\ref{claim:equivalence2},
$P^{\ell}_e$ is an $s$--$e$ path in $D'[I]$ with $|P^{\ell}_e|\leq \ell+1$,
and hence $c(P^*_e)\leq c(P^{\ell}_e)$. 
Also, $c(P^{\ell}_e)\leq c(P^{\ell-1}_e)$ by the algorithm.
If $c(P^*_e)=c(P^{\ell-1}_e)$, then $P^{\ell}_e=P^{\ell-1}_e$ and the statement immediately follows.
Otherwise, $c(P^*_e)<c(P^{\ell-1}_e)$. This implies $|P^*_e|=\ell+1$.
Then $e\in I$ if $\ell$ is odd and $e\in E\setminus I$ if $e$ is even
(recall that $D'[I]$ is a bipartite digraph between $E\setminus I$ and $I$).
Let $e'$ be the second last element in $P^*_e$ and let $P^*_{e'}:=P^*_{e}-e$ (i.e., delete $e$ from $P^*_e$). 
Then $P^*_{e'}$ is an $s$--$e'$ path with $|P^*_{e'}|=\ell$, 
and hence $c(P^{\ell-1}_{e'})\leq c(P^*_{e'})$ by the induction hypothesis. 
So $c(P^{\ell-1}_{e'}+e)\leq c(P^*_e)<c(P^{\ell-1}_e)$.
If $\ell$ is odd, then $(e',e)\in A'_2[I]$, and hence ($\ast$) holds with $x:=e'$ and $y:=e$ by Claim~\ref{claim:equivalence1}.
If $\ell$ is even, then $(e',e)\in A'_1[I]$, and hence ($\ast\ast$) holds with $y:=e'$ and $x:=e$ by Claim~\ref{claim:equivalence2}.
In either case, we obtain $c(P^\ell_e)\leq c(P^*_{e'}+e)=c(P^*_e)$ and $|P^{\ell}_e|\leq \ell+1=|P^*_e|$.
Hence $P^\ell_e$ is a shortest cheapest $s$--$e$ path subject to $|P_e|\leq \ell+1$. 
\end{proof}

In what follows, we prove Claims~\ref{claim:equivalence1} and~\ref{claim:equivalence2}.
First, we need the following lemma.

\begin{lemma}\label{lem:invariants1}
	Let $s\in S_I$. At any moment of the algorithm, the following conditions hold.
	\begin{enumerate}
		\item[\rm (a)] Any $y\in I$ with $P_y\neq \nul$ satisfies $|I\triangle P_y|=|I|$,
		$r_1(I\triangle P_y)=|I|$, and $r_2(I\triangle P_y)=|I|$.
		\item[\rm (b)] Any $x\in E\setminus I$ with $P_x\neq \nul$ satisfies 
		$|I\triangle P_x|=|I|+1$, $r_1(I\triangle P_x)=|I|+1$, and $r_2(I\triangle P_x)\geq|I|$.
		Moreover, it satisfies $r_2(I\triangle P_x)=|I|+1$ if and only if $x\in T_I$.
		\item[\rm (c)] For each $e\in E$ with $P_e \neq \nul$, we have $(P_e-s)\cap S_I=\emptyset$ and
		$(P_e-e)\cap T_I=\emptyset$.
	\end{enumerate}
\end{lemma}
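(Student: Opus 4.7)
The plan is to prove all three invariants (a), (b), (c) simultaneously, by induction on the number of updates Algorithm~\ref{alg:3} has performed. At initialization only $P_s=s$ is non-\nul, and (b) for $P_s$ is immediate: $s\in S_I$ gives $r_1(I+s)=|I|+1$, monotonicity gives $r_2(I+s)\ge r_2(I)=|I|$, and $r_2(I+s)=|I|+1$ holds if and only if $I+s\in \cI_2$, i.e., $s\in T_I$; invariants (a) and (c) are vacuous or trivial since $P_s-s=\emptyset$.

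For the inductive step on an odd update $P_y\gets P_x+y$ (which happens only when $(*)$ holds), observe that $y\in I\setminus P_x$ gives $y\in I\triangle P_x$ and $|I\triangle(P_x+y)|=|I|$, so the second half of $(*)$ combined with the trivial bound $r_i\le |I|$ forces $r_1=r_2=|I|$; this is exactly (a) for the new $P_y$. For (c), $(P_y-s)\cap S_I$ decomposes into $(P_x-s)\cap S_I=\emptyset$ (inductive (c)) and $\{y\}\cap S_I=\emptyset$ (since $y\in I$), while $(P_y-y)\cap T_I=P_x\cap T_I$ reduces via the inductive (c) for $P_x$ to checking $x\notin T_I$; this I obtain from the first half of $(*)$, since $\rsum(I\triangle P_x)=2|I|+1$ together with the inductive value $r_1(I\triangle P_x)=|I|+1$ yields $r_2(I\triangle P_x)=|I|$, and then the moreover clause of (b) applied to $P_x$ gives $x\notin T_I$.

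For the inductive step on an even update $P_x\gets P_y+x$ (which happens only when $(**)$ holds), $|I\triangle(P_y+x)|=|I|+1$ and $r_2((I\triangle P_y)+x)\ge |I|$ by monotonicity using the inductive (a). The heart of the proof is the equivalence
\[
r_2(I\triangle P_x)=|I|+1 \iff x\in T_I.
\]
For the forward direction, if $r_2=|I|+1$ then $(I\triangle P_y)+x\in \cI_2$, and the independence-augmentation axiom applied to $I,(I\triangle P_y)+x\in \cI_2$ yields $e\in((I\triangle P_y)+x)\setminus I=(P_y\setminus I)+x$ with $I+e\in\cI_2$. The inductive (c) for $P_y$ rules out $e\in P_y\setminus I\subseteq P_y-y$ (any such $e$ would belong to $T_I$, contradicting (c)), so $e=x$ and $x\in T_I$. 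For the backward direction, I split on the two alternatives of $(**)$: in the first, $x\notin T_I$ implies $r_2=|I|$ by the forward direction, whence $r_1=|I|+1$ from $\rsum=2|I|+1$; in the second, $\rsum((I\triangle P_y)+x)=2|I|+2$ directly forces $r_1=r_2=|I|+1$ and the forward direction supplies $x\in T_I$. Invariant (c) for the new $P_x$ needs $x\notin S_I$, which holds in the first alternative directly and in the second because $\rsum(I+x)=2|I|+1$ together with $r_2(I+x)=|I|+1$ (implied by $x\in T_I$) forces $r_1(I+x)=|I|$.

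The main obstacle is the iff part of (b) in the even step; it is exactly here that invariants (a), (b), (c) couple, with (c) on $P_y$ used via the augmentation axiom to exclude the possibility that $(I\triangle P_y)+x$ becomes $\cI_2$-independent for some $x\notin T_I$, which would otherwise break the equivalence between the value of $r_2$ and membership in $T_I$.
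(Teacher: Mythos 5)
Your proof is correct and follows essentially the same route as the paper's: both decode the update conditions $(\ast)$ and $(\ast\ast)$ into rank and $S_I/T_I$ membership facts that propagate along each path. The only differences are organizational — you phrase the argument as an explicit simultaneous induction on the sequence of updates and use the independence-augmentation axiom for the ``moreover'' clause of (b), whereas the paper traces the update conditions along a prefix of $P_e$ and argues the same fact via $r_2(I \cup P_x)$ and a closure observation — but the substance is identical.
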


\begin{proof}
	By the algorithm, for each $e\in E$ with $P_e\neq \nul$, 
	the sequence $P_e$ starts with $s\in E\setminus I$ and 
	uses elements in $E\setminus I$ and $I$ alternately. 
	Then $|I\triangle P_y|=|I|$ for any $y\in I$ with $P_y\neq\nul$ and 
	$|I\triangle P_x|=|I|+1$ for any $x\in E\setminus I$ with $P_x\neq\nul$.
	For any $y\in I$, after $P_y$ is updated, it satisfies $\rsum(I\triangle P_y)=2|I|$ by 
	the condition ($\ast$) for update. Then (a) follows.
	
	For any $e\in E\setminus I$ with $P_e \neq \nul$, any $x'\in (P_e\setminus I)-e$ has some succeeding element $y'\in I$ in $P_e$ 
	and ($\ast$) holds for $x'$ and $y'$. Hence $\rsum(I\triangle P_{x'})=2|I|+1$.
	If $x'=s\in S_I$, it immediately implies $x'\not\in T_I$.
	If $x'\neq s$, then $x'$ has some preceding element $y''$ in $P_x$,
	and ($\ast\ast$) for $y''$ and $x'$ implies $\rsum(I+x')=2|I|$ (as $\rsum(I \triangle P_{x'}) \neq 2|I| + 2$), and hence $x'\not\in S_I\cup T_I$.
	Thus, $s\in S_I\setminus T_I$ and any $x'\in (P_e\setminus I)-s-e$ satisfies $x'\not\in S_I\cup T_I$. 
	
	For any $x\in (E\setminus I)-s$ with $P_x\neq \nul$, by ($\ast\ast$) for $x$ and its preceding element $y$, we have
	$[\rsum(I+x)=2|I|,\ \rsum(I\triangle P_x)=2|I|+1]$ or 
	$[\rsum(I+x)=2|I|+1,\ \rsum(I\triangle P_x)=2|I|+2]$.
	In the former case, $\rsum(I+x)=2|I|$ implies $x\not\in S_I\cup T_I$, 
	and hence $P_x\cap T_I=\emptyset$.
        This implies $r_2(I\triangle P_x)\leq r_2(I \cup P_x) = |I|$, and then $\rsum(I\triangle P_x)=2|I|+1$ implies $r_1(I\triangle P_x)=|I|+1$ and $r_2(I\triangle P_x)=|I|$.
	In the latter case, $\rsum(I\triangle P_x)=2|I|+2$ implies 
	$r_1(I\triangle P_x)=r_2(I\triangle P_x)=|I|+1$.
        Since any $x' \in (P_x \setminus I) - x$ satisfies $x' \not\in T_I$ (as seen in the previous paragraph), we must have $x\in T_I$,
	and then $x\not\in S_I$ follows from $\rsum(I+x)=2|I|+1$.
	Thus, (b) and (c) are shown.
\end{proof}

Now we are ready to show Claims~\ref{claim:equivalence1} and~\ref{claim:equivalence2}.
We sometimes denote by $V(P)$ the set of elements in a sequence $P$ for emphasizing that we focus on the set rather than the sequence.

\begin{proof}[Proof of Claim~\ref{claim:equivalence1}]
By Lemma~\ref{lem:invariants1}(b),
$\rsum(I\triangle P_x)=2|I|+1$ is equivalent to $x\not\in T_I$.
Lemma~\ref{lem:invariants1}(b) also implies $r_1(I\triangle P_x)=|I|+1=|I\triangle P_x|$, 
and hence $r_1(I\triangle (P_x+y))=r_1((I\triangle P_x)-y)=|I|$.
Therefore, $\rsum(I\triangle (P_x+y))=2|I|$ is equivalent to 
$r_2(I\triangle (P_x+y))=|I|=|I\triangle (P_x+y)|$, i.e., $I\triangle (P_x+y)\in \cI_2$.
Then, the condition $(\ast)$ is equivalent to
\begin{align*}
(*)'\quad x \not\in T_I,\quad I \triangle (P_x + y) \in \cI_2.
\end{align*}
We show that $(\ast)'$ holds if and only if $(x,y)\in A'_2[I]$.
By the induction hypothesis, $P_x$ is an $s$--$x$ path in $D'[I]$, 
and hence it uses arcs of $A'_2[I]$ and $A'_1[I]$ alternately.
Let $N_1$ and $N_2$ be the sets of those arcs of $A'_1[I]$ and $A'_2[I]$, respectively.
Since $x\in E\setminus I$, $N_1$ forms a matching that covers $V(P_x)-s$
and $N_2$ forms a matching that covers $V(P_x)-x$.

To show the ``if'' part, suppose $(x,y)\in A'_2[I]$. Then $x\not\in T_I$ by the definition of $A'_2[I]$.
Also, $N'_2:=N_2+(x,y)\subseteq A'_2[I]$ forms a perfect matching on $V(P_x+y)$.
Suppose conversely that $I\triangle (P_x+y)\not\in \cI_2$.
Then Lemma~\ref{lem:UPM} implies that $A_2[I]$ contains some other perfect matching $N''_2$ on $V(P_x+y)$.
We see that $N''_2\subseteq A'_2[I]$ because $(P_x+y)\cap T_I=\emptyset$ by Lemma~\ref{lem:invariants1}(c) and $x\not\in T_I$. Thus, $N'_2$, $N''_2$, and $N_1$ are all contained in $D'[I]$.
Consider the digraph $D=(V(P_x+y), A)$ whose arc set $A$ consists of the arcs in $N'_2$, $N''_2$, and two copies of $N_1$, where we consider their multiplicity, i.e., each arc in $(N'_2 \cap N''_2) \cup N_1$ is taken twice (parallel).
Since $N''_2\neq N'_2$, there exists an arc in $N''_2$ whose head precedes its tail on the path $P_x+y$.
Then $D$ contains at least one directed cycle.
The indegree and outdegree of each vertex in $D$ are given as
$(d^{\rm in}(s),d^{\rm out}(s))=(0,2)$, 
$(d^{\rm in}(y),d^{\rm out}(y))=(2,0)$,
and $(d^{\rm in}(e),d^{\rm out}(e))=(2,2)$ for all the other vertices $e$.
Then $A$ can be decomposed into the arc sets of two $s$--$y$ paths and one or more cycles.
Let $P_1$ and $P_2$ be those $s$--$y$ paths and $\mathcal{Q}$ be the set of those cycles
(where paths and cycles are sequences of vertices).
Then each vertex in $V(P_x+y)$ is used exactly twice in this decomposition, and hence 
\[\textstyle{c(P_1)+c(P_2)+\sum_{Q \in \mathcal{Q}}c(Q)=2c(P_x+y).}\]
Since $D'[I]$ has no negative cycle, this implies $c(P_1)+c(P_2)\leq 2c(P_x+y)$.
Also, $\mathcal{Q}\neq \emptyset$ implies $V(P_1)\subsetneq V(P_x+y)$ or $V(P_2)\subsetneq v(P_x+y)$.
In case $V(P_1)=V(P_x+y)$, we have $V(P_2)\subsetneq V(P_x+y)$, which implies $|P_2|<|P_x+y|\leq\ell+1$ 
because $|P_x|\leq \ell$ holds by induction.
Also, $V(P_1)=V(P_x+y)$ implies $c(P_2)\leq c(P_x+y)$, where $c(P_x+y)<c(P_y)$ by assumption.
Thus,  $P_2$ is an $s$--$y$ path in $D'[I]$ with $c(P_2)<c(P_y)$ and $|P_2|\leq \ell$, which 
contradicts the induction hypothesis that $P_y$ is a shortest cheapest $s$--$y$ path subject to $|P_y|\leq \ell$.
The case $V(P_2)=V(P_x+y)$ is similar.
In case $V(P_1)\subsetneq V(P_x+y)$ and $V(P_2)\subsetneq V(P_x+y)$, 
both $P_1$ and $P_2$ satisfy $|P_i|\leq \ell$ and at least one of them, say $P_i$, satisfies $c(P_i)\leq c(P_x+y)<c(P_y)$,
which again contradicts the induction hypothesis on $P_y$.

We next show the ``only if'' part. Let $(\ast)'$ hold.
By $I\triangle (P_x+y)\in \cI_2$, Lemma~\ref{lem:UPM-inv} implies that 
$A_2[I]$ contains a perfect matching $N'_2$ on $V(P_x+y)$.
Also, $N'_2\subseteq A'_2[I]$ because $(P_x+y)\cap T_I=\emptyset$ by Lemma~\ref{lem:invariants1}(c) and 
$x\not\in T_I$. Thus, $N_2$, $N'_2$, and $N_1$ are all contained in $D'[I]$.
Consider the digraph $D^*=(V(P_x+y), A^*)$ whose arc set $A^*$ consists of the arcs in $N_2$, $N'_2$, and two copies of $N_1$, where we consider their multiplicity as before.
Conversely, suppose that $(x,y)\not\in A'_2[I]$. Then $(x,y)\not\in N'_2$.
Since $N'_2$ covers $V(P_x+y)$, 
it has an arc whose tail is $x$ and whose head precedes $x$ in the path $P_x+y$.
Then $D^*$ contains at least one directed cycle.
Note that
$(d^{\rm in}(s),d^{\rm out}(s))=(0,2)$, 
$(d^{\rm in}(x),d^{\rm out}(x))=(2,1)$,
$(d^{\rm in}(y),d^{\rm out}(y))=(1,0)$,
and $(d^{\rm in}(e),d^{\rm out}(e))=(2,2)$ for all the other vertices $e$ in $D^*$.
Then $A^*$ can be decomposed into the arc sets of one $s$--$x$ path, 
one $s$--$y$ path, and one or more cycles.
Let $R_x$, $R_y$, and $\mathcal{Q}'$ be that $s$--$x$ path, $s$--$y$ path, and the set of cycles, respectively.
Then each vertex in $V(P_x+y)-y$ is used twice and $y$ is used once in this decomposition, and hence 
\[\textstyle{c(R_x)+c(R_y)+\sum_{Q \in \mathcal{Q}'}c(Q)=c(P_x)+c(P_x+y).}\]
Since $D'[I]$ has no negative cost cycle, $c(R_x)+c(R_y)\leq c(P_x)+c(P_x+y)$.
Also, $\mathcal{Q}\neq \emptyset$ implies $V(R_x)\subsetneq V(P_x)$ or $V(R_y)\subsetneq v(P_x+y)$.
If $V(R_x)=V(P_x)$, then $V(R_y)\subsetneq V(P_x+y)$, which implies $|R_y|<|P_x+y|\leq \ell+1$. 
Also, $V(R_x)=V(P_x)$ implies $c(R_y)\leq c(P_x+y)<c(P_y)$.
Thus, $R_y$ is an $s$--$y$ path with $c(R_y)<c(P_y)$ and $|R_y|\leq \ell$, 
which contradicts the induction hypothesis on $P_y$.
In case $V(R_y)=V(P_x+y)$, we have $V(R_x)\subsetneq V(P_x)$, which implies $|R_x|<|P_x|$.
Also, $V(R_y)=V(P_x+y)$ implies $c(R_x)\leq c(P_x)$.
Thus, $R_x$ is an $s$--$x$ path with $c(R_x)\leq c(P_x)$ and $|R_x|<|P_x|$,
which contradicts the induction hypothesis on $P_x$.
In case $V(R_x)\subsetneq V(P_x)$ and $V(R_y)\subsetneq V(P_x+y)$, we have $|R_x|<|P_x|$ and $|R_y|\leq \ell$. 
Also, we have $c(R_x)\leq c(P_x)$ or $c(R_y)\leq c(P_x+y)<c(P_y)$, and
each of them yields a contradiction.
\end{proof}

\begin{proof}[Proof of Claim~\ref{claim:equivalence2}]
Since $I\triangle (P_y+x)=(I\triangle P_y)+x$, Lemma~\ref{lem:invariants1}(a) implies 
$|I\triangle (P_y+x)|=|I|+1$, $r_1(I\triangle (P_y+x))\geq |I|$, and $r_2(I\triangle (P_y+x))\geq |I|$.
Also, by Lemma~\ref{lem:invariants1}(c), 
we have $P_y\cap T_I=\emptyset$ (which implies $\cl_2(I)=\cl_2(I\triangle P_y)$), 
and hence $r_2(I\triangle (P_y+x))=|I|+1$ holds if and only if $x\in T_I$.
If $x\not\in T_I$ (resp., $x\in T_I$), then 
$\rsum(I\triangle (P_y+x))=2|I|+1$ (resp., $\rsum(I\triangle (P_y+x))=2|I|+2$) 
is equivalent to $r_1(I\triangle (P_y+x))=|I|+1$,
and also $\rsum(I+x)=2|I|$ (resp., $\rsum(I+x)=2|I|+1$) 
is equivalent to $x\not\in S_I$. 
Therefore, $(\ast\ast)$ is equivalent to
\begin{align*}
(**)'\quad x\not\in S_I,\quad I\triangle (P_y+x)\in \cI_1.
\end{align*}
We show that $(\ast\ast)'$ holds if and only if $(y,x)\in A'_1[I]$.
By induction hypothesis, $P_y$ is an $s$--$y$ path in $D'[I]$, 
and hence it uses arcs of $A'_2[I]$ and $A'_1[I]$ alternately.
Let $N_1$ and $N_2$ be the sets of those arcs of $A'_1[I]$ and $A'_2[I]$, respectively.
Then $N_1$ forms a matching that covers $V(P_y)-y-s$
and $N_2$ forms a matching that covers $V(P_y)$.

To show the ``if'' part, suppose $(y,x)\in A'_1[I]$. Then $x\not\in S_I$ by the definition of $A'_1[I]$.
Also, $N'_1:=N_1+(y,x)\subseteq A'_1[I]$ forms a perfect matching on $V(P_y+x)-s$.
Suppose conversely that $I\triangle (P_y+x)\not\in \cI_1$.
It implies $I\triangle(P_y+x-s)\not\in\cI_1$ as follows.
Lemma~\ref{lem:invariants1}(c) implies $(P_y+x)\cap S_I=\{s\}$, and hence $\cl_1(I\triangle (P_y+x-s))\subseteq \cl_1(I)$.
If $\cl_1(I\triangle (P_y+x-s)) = \cl_1(I)$, then $\cl_1(I \triangle (P_y + x)) = \cl_1(I + s)$, and $I \triangle (P_y + x) \in \cI_1$ as $I + s \in \cI_1$, a contradiction.
Thus, we have $\cl_1(I\triangle (P_y+x-s)) \subsetneq \cl_1(I)$, which implies $I \triangle (P_y + x - s) \not\in \cI_1$.
Then Lemma~\ref{lem:UPM} implies that $A_1[I]$ contains some other perfect matching $N''_1$ on $V(P_y+x)-s$.
We see that $N''_1\subseteq A'_1[I]$ because $(P_y+x-s)\cap S_I=\emptyset$ by Lemma~\ref{lem:invariants1}(b) and $x\not\in S_I$. Thus, $N'_1$, $N''_1$, and $N_2$ are all contained in $D'[I]$.
Consider the digraph $D=(V(P_x+y), A)$ whose arc set $A$ consists of the arcs in $N'_1$, $N''_1$, and two copies of $N_2$, where we consider their multiplicity as before.
Similarly to the proof of Claim~\ref{claim:equivalence1}, we see that there exists an $s$--$y$ path $P$ in $D'[I]$ with $c(P)<c(P_x)$ and $|P|\leq \ell$, which contradicts the induction hypothesis on $P_x$.

We next show the ``only if'' part. Let $(\ast\ast)'$ hold.
By $I\triangle (P_y+x-s)\subseteq I\triangle (P_y+x) \in \cI_1$, Lemma~\ref{lem:UPM-inv} implies that 
$A_1[I]$ contains a perfect matching $N'_1$ on $V(P_y+x)-s$.
Also, $N'_1\subseteq A'_1[I]$ because $(P_y+x-s)\cap S_I=\emptyset$ by Lemma~\ref{lem:invariants1}(c) and 
$x\not\in S_I$. Thus, $N_1$, $N'_1$, and $N_2$ are all contained in $D'[I]$.
Consider the digraph $D^*=(V(P_x+y), A^*)$ whose arc set $A^*$ consists of the arcs in $N_1$, $N'_1$, and two copies of $N_2$, where we consider their multiplicity as before.
Then, similarly to the proof of Claim~\ref{claim:equivalence1}, we see that there exists an $s$--$y$ or $s$--$x$ path whose property contradicts the induction hypothesis on $P_y$ or $P_x$.
\end{proof}

\begin{lemma}\label{lem:BellmanFord}
	The output of {\sc EmulatingBellmanFord}$[E, c, \rsum, I, s]$ is always correct.
\end{lemma}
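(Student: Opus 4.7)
By symmetry, it suffices to treat the case $s \in S_I$; the case $s \in T_I$ follows by running the same argument in the inverse of $D'[I]$. The plan is to combine Lemma~\ref{lem:invariants2} (which already does the heavy lifting) with Lemma~\ref{lem:invariants1}(b) (which identifies $T_I$ via the rank sum value), so that the two-step structure of Algorithm~\ref{alg:3}---Bellman--Ford-style propagation followed by a final selection step---can be verified to output a shortest cheapest $s$--$T_I$ path in $D'[I]$.

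First I would apply Lemma~\ref{lem:invariants2} with $\ell = n - 1$. Since $D'[I]$ has exactly $n$ vertices, every simple path in $D'[I]$ has at most $n$ vertices, so the length constraint $|P_e| \leq n$ is vacuous. Hence, immediately after the loop in Step~2 terminates, for every $e \in E$ the sequence $P_e$ is a shortest cheapest $s$--$e$ path in $D'[I]$, with the convention that $P_e = \nul$ records the nonexistence of such a path. By Lemma~\ref{lem:shortest-cheapest-path} this is equivalently a shortest cheapest $s$--$e$ path in $D[I]$.

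Next I would verify that Step~3 correctly filters terminal candidates and selects an optimum one. For each $x \in E \setminus I$ with $P_x \neq \nul$, Lemma~\ref{lem:invariants1}(b) gives $|I \triangle P_x| = |I| + 1$, $r_1(I \triangle P_x) = |I| + 1$, and $r_2(I \triangle P_x) \geq |I|$, together with the dichotomy that $r_2(I \triangle P_x) = |I| + 1$ if and only if $x \in T_I$. Consequently, $\rsum(I \triangle P_x) = 2|I| + 2$ is equivalent to $x \in T_I$, so the constraint in Step~3 selects exactly the set of targets $t \in T_I$ reachable from $s$ in $D'[I]$. Minimizing $c(P_t)$ over this set (with ties broken by $|P_t|$ in the $\mathrm{arg\,min}$) therefore yields a shortest cheapest $s$--$T_I$ path; if instead no such $t$ exists, every candidate has $c(P_t) = \infty$ (equivalently $P_t = \nul$), so there is no $s$--$T_I$ path in $D'[I]$ and the algorithm correctly returns \emph{``No''}.

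The main obstacle in proving this lemma was already resolved inside Lemma~\ref{lem:invariants2} (and its supporting Claims~\ref{claim:equivalence1} and~\ref{claim:equivalence2}), where the nontrivial point is that the rank-sum conditions $(\ast)$ and $(\ast\ast)$ correctly certify the presence of arcs in $A'_1[I]$ and $A'_2[I]$. Given that, the current proof reduces to the routine observations recorded above: $n-1$ rounds suffice to propagate optimal distance labels in an acyclic (or at least nonnegative-cycle) Bellman--Ford setting, and the rank sum value $2|I|+2$ is a clean certificate for membership in $T_I$.
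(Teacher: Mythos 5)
Your proof is correct and follows essentially the same route as the paper's own (very terse) proof: both invoke Lemma~\ref{lem:invariants2} at $\ell = n-1$ to establish that every $P_e$ is a shortest cheapest $s$--$e$ path, and Lemma~\ref{lem:invariants1}(b) to show that the Step~3 condition $\rsum(I \triangle P_t) = 2|I| + 2$ exactly characterizes $t \in T_I$. One small remark: the sentence invoking Lemma~\ref{lem:shortest-cheapest-path} to claim each $P_e$ is also a shortest cheapest $s$--$e$ path in $D[I]$ is both unnecessary and not quite what that lemma says (it concerns $S_I$--$T_I$ paths, and the output specification of Algorithm~\ref{alg:3} is stated in terms of $D'[I]$ anyway), but the rest of your argument does not rely on it.
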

\begin{proof}
For any $e\in E$, a shortest cheapest $s$--$e$ path $P$ satisfies $|P|\leq n$.
Then, after the $(n-1)$st updating process, $P_e$ is indeed a shortest cheapest $s$--$e$ path
by Lemma~\ref{lem:invariants2}.
Also, when some path is returned, it is a shortest cheapest $s$--$T_I$ path
by Lemma~\ref{lem:invariants1}(b).
\end{proof}

\subsection{Matroid Intersection Algorithm under Rank Sum Oracle}
Using Algorithm~\ref{alg:3} as a subroutine, we can emulate {{\sc CheapestPathAugment}$[E, w, \cI_1, \cI_2, I]$} as Algorithm~\ref{alg:4}.

\begin{algorithm2e}[ht]
	\caption{{{\sc CheapestPathAugmentRankSum}$[E, w, \cI_1, \cI_2, I]$}} \label{alg:4}
	\SetAlgoLined	
	\SetKwInOut{Input}{Input}\SetKwInOut{Output}{Output}
	\Input{A finite set $E$, a weight function $w \colon E \to \RR$, 
		oracle access to $\rsum \colon 2^E \to \ZZ_{\geq 0}$, 
		and a $w$-maximal set $I \in \cI_1^k \cap \cI_2^k$ for some $k = 0, 1, \dots, n - 1$.}
	\Output{A $w$-maximal set $J \in \cI_1^{k+1} \cap \cI_2^{k+1}$ if one exists, or a message \emph{``No''}.}
	\BlankLine
	
	Determine the set $S_I\cup T_I=\{\,s\in E\setminus I\mid \rsum(I+s)\geq 2|I|+1\,\}$.
	Define a cost function $c \colon E \to \RR$ by \eqref{eq:cost}.
	
For each $s\in S_I\cup T_I$, apply {\sc EmulatingBellmanFord}$[E, c, \rsum, I, s]$.

If some path is returned in Step 2, then let $P$ be a shortest cheapest one among all returned paths, and return $J = I \triangle P$.
Otherwise, return a message \emph{``No''}.
\end{algorithm2e}

The following theorem concludes the section.

\begin{lemma}\label{lem:cpars}
The output of {\sc CheapestPathAugmentRankSum}$[E, w, \cI_1, \cI_2, I]$ is always correct.
\end{lemma}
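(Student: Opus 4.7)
The plan is to assemble three ingredients: correctness of the $S_I \cup T_I$ computation in Step~1, correctness of the shortest-cheapest path search in Step~2, and the classical correctness of the augmentation in Algorithm~\ref{alg:2}. For Step~1, an element $s \in E \setminus I$ lies in $S_I \cup T_I$ iff $I + s \in \cI_1 \cup \cI_2$; since $r_i(I) = |I|$, this is equivalent to $\rsum(I + s) \geq 2|I| + 1$, so $S_I \cup T_I$ is determined correctly from the rank sum oracle (even though the individual sets $S_I$ and $T_I$ are not). For Step~2, by Lemma~\ref{lem:BellmanFord}, each call {\sc EmulatingBellmanFord}$[E, c, \rsum, I, s]$ returns a shortest cheapest $s$--$T_I$ path in $D'[I]$ when $s \in S_I$, and (after reversal) a shortest cheapest $S_I$--$s$ path in $D'[I]$ when $s \in T_I$; taking the shortest cheapest among all of these yields a shortest cheapest $S_I$--$T_I$ path in $D'[I]$, which by Lemma~\ref{lem:shortest-cheapest-path} is also a shortest cheapest $S_I$--$T_I$ path in the full graph $D[I]$. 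The degenerate case $s \in S_I \cap T_I$ is dispatched automatically: $\rsum(I + s) = 2|I| + 2$, so the initialization $P_s = s$ passes the final test in Algorithm~\ref{alg:3} and the length-one path is correctly returned, yielding $J = I + s$ after augmentation.

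With a shortest cheapest $S_I$--$T_I$ path $P$ in $D[I]$ in hand, the classical correctness of {\sc CheapestPathAugment} (Algorithm~\ref{alg:2}, cf.\ \cite[\S\,41.3]{schrijver2003combinatorial}) guarantees that $J = I \triangle P$ is a $w$-maximal set in $\cI_1^{k+1} \cap \cI_2^{k+1}$: the ``cheapest'' condition gives $w$-maximality via Lemma~\ref{lem:negative_cycle}, and the ``shortest'' tie-break ensures that no new negative-cost cycle arises in $D[J]$. Conversely, if every call in Step~2 returns ``No'', then $D'[I]$ has no $S_I$--$T_I$ path, so by Lemma~\ref{lem:shortest-cheapest-path} neither does $D[I]$; the augmentation theorem underlying Algorithm~\ref{alg:1} (equivalently, Theorem~\ref{thm:Edmonds}) then implies that no common independent set of size $k+1$ exists, so returning ``No'' is correct.

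The main obstacle is essentially bookkeeping: once Lemmas~\ref{lem:shortest-cheapest-path} and~\ref{lem:BellmanFord} are invoked, the proof reduces to checking that $S_I \cup T_I$ can be read off from $\rsum$, verifying that aggregating the per-source outputs really gives a globally shortest cheapest path (each returned path is optimal for its own source, and taking the minimum over all sources together with the length tie-break preserves this), handling the degenerate $S_I \cap T_I$ case, and appealing to the standard theory of weighted matroid intersection augmentation.
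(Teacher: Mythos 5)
Your proof is correct and takes essentially the same approach as the paper: both reduce the claim to Lemma~\ref{lem:shortest-cheapest-path} and Lemma~\ref{lem:BellmanFord} together with the classical correctness of cheapest-path augmentation (Algorithm~\ref{alg:2}). The paper's own proof is a one-line citation of those two lemmas, whereas you spell out the same chain of reasoning more explicitly (determining $S_I\cup T_I$ from $\rsum$, the degenerate $S_I\cap T_I$ case, and the aggregation over sources), which is consistent with the paper's surrounding remarks.
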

\begin{proof}
The correctness of Algorithm~\ref{alg:4} immediately follows from Lemmas~\ref{lem:shortest-cheapest-path} and~\ref{lem:BellmanFord}.
\end{proof}

Lemma~\ref{lem:cpars} completes the proof of Theorem~\ref{thm:ranksum}.

\begin{proof}[Proof of Theorem~\ref{thm:ranksum}]
Starting from $I=\emptyset$, the size of the common independent set can be gradually increased using  {\sc CheapestPathAugmentRankSum}$[E, w, \cI_1, \cI_2, I]$ until $I$ becomes a common independent set of maximum cardinality.
The correctness of the algorithm follows from Lemma~\ref{lem:cpars}.
Note that, if we are asked to find a maximum-weight common independent set, then it suffices to output one with maximum weight among the obtained $w$-maximal common independent sets.
\end{proof}
\section{Matroid Intersection under Common Independence Oracle}
\label{sec:ci}
As discussed in Section~\ref{sec:oracles}, the common independence oracle is strictly weaker than the minimum rank and the rank sum oracles. As weighted matroid intersection turned out to be tractable for the rank sum oracle, the complexity of the problem under the common independence oracle is especially interesting. 

In what follows, we present an algorithm for the unweighted matroid intersection problem when one of the matroids is a partition matroid, and an algorithm for the weighted matroid intersection problem when one of the matroids is an elementary split matroid. We also show that the common independence oracle, when complemented with the rank oracle, is strong enough to design an algorithm similar to that for the rank sum case.

\subsection{Intersection with Partition Matroid}
\label{sec:cipart}
The aim of this section is to show that the unweighted matroid intersection problem is tractable under the common independence oracle when $\bM_1$ is a \textbf{partition matroid} with all-one upper bound on the partition classes, that is, when $\cI_1$ is represented as $\cI_1 = \{\, I \subseteq E \mid |I \cap E_i| \leq 1\ \text{for $i=1,\dots,q$}\,\}$ for some partition $E=E_1\cup\dots\cup E_q$. We will provide an algorithm that emulates Algorithm~\ref{alg:1}, i.e., {{\sc Augment}$[E, \cI_1, \cI_2, I]$}, using only the common independence oracle.

Take any $k = 0, 1, \dots, n - 1$ and let $I\in \cI_1^k \cap \cI_2^k$.
To emulate Algorithm~\ref{alg:1}, we want to find a shortest $S_I$--$T_I$ path in 
the exchangeability graph $D[I] = (E \setminus I, I; A_1[I]\cup A_2[I])$. 
With only the common independence oracle, however, we cannot construct $D[I]$, and cannot determine even $S_I$ or $T_I$.

Note that a shortest $S_I$--$T_I$ path in $D[I]$ never uses arcs entering sources or leaving sinks.
Therefore, finding a shortest $S_I$--$T_I$ path in $D[I]$ is equivalent to finding it in $D'[I]$,
where $D'[I]$ is the subgraph of $D[I]$ obtained by removing those arcs from $D[I]$ (it is used also in Section~\ref{sec:rank_sum}).
We now provide a search procedure, described as Algorithm~\ref{alg:part2}, 
that will be used as a subroutine for our augmentating procedure.
If a given element $s$ belongs to $S_I$, this search algorithm works like the breadth first search in $D'[I]$ rooted at $s$,
and returns a shortest $s$--$T_I$ path or certifies the nonexistence of such a path. 
 
In Algorithm~\ref{alg:part2}, for each $y\in I$, a sequence $P_y$ of distinct elements is defined.
In our analysis, $P_y$ will turn out to be a shortest $s$--$y$ path in $D'[I]$.
We use the notation $P_y+x$ to denote the sequence obtained by appending an element $x$ to $P_y$.

\begin{algorithm2e}[ht]
	\caption{{{\sc EmulatingBFS}$[E, \cI_1 \cap \cI_2, I, s]$}} \label{alg:part2}
	\SetAlgoLined	
	\SetKwInOut{Input}{Input}\SetKwInOut{Output}{Output}
\Input{Oracle access to $\cI_1 \cap \cI_2$ where $\cI_1$ is the independent set family of a partition matroid, a common independent set $I \in \cI^k_1 \cap \cI^k_2$, and an element $s \in E \setminus I$.}
\Output{A sequence $P$ with $I\triangle P\in \cI_1 \cap \cI_2$ and $|I\triangle P|=k+1$ if one exists, or a message \emph{``No''} otherwise.
In particular, if $s\in S_I$ and $D'[I]$ has an $s$--$T_I$ path, then a shortest $s$--$T_I$ path is returned.} 
\BlankLine

If $I+s\in \cI_1 \cap \cI_2$, halt with returning $s$.

For each $y\in I$, set $P_y\gets sy$ if $I+s-y\in \cI_1\cap \cI_2$, and $P_y \gets \nul$ otherwise.

For $\ell=1,2,\dots$, do the following.
\protect{
\begin{minipage}{0.9\textwidth}
	\begin{enumerate}
	\renewcommand{\labelenumi}{(\roman{enumi})}
		\item If there is no $y\in I$ with $|P_y|=2\ell$, halt with returning \emph{``No''}.
		\item If there exist $y'\in I$ and $x\in E\setminus I$ such that 
		$|P_{y'}|=2\ell$, $x\not\in P_{y'}$, $\{y',x\}\not\in \cI_1\cap\cI_2$, and $I\triangle(P_{y'}+x)\in \cI_1\cap \cI_2$, then 
		halt with returning $P_{y'}+x$.
		\item For each $y\in I$ with $P_y = \nul$, if there exist
		$y'\in I$ and $x\in E\setminus I$ such that 
		$|P_{y'}|=2\ell$, $x\not\in P_{y'}$, $\{y',x\}\not\in \cI_1\cap \cI_2$, and $I\triangle(P_{y'}+x+y)\in \cI_1\cap \cI_2$,
		then $P_y\gets P_{y'}+x+y$.
			\end{enumerate}
\end{minipage}
}
\end{algorithm2e}

By the algorithm, it is clear that the output is either a sequence $P$ with $I\triangle P\in \cI_1 \cap \cI_2$ and
$|I\triangle P|=k+1$ or a message \emph{``No''}.
Also, if $s\in S_I\cap T_I$, we see that $s$ itself is a shortest $s$--$T_I$ path and is returned at Step~1.
Therefore, we assume $s\in S_I\setminus T_I$ and show that a shortest $s$--$T_I$ path is returned if such a path exists.
We denote by $\dist(s,T_I)$ the length (i.e., the number of vertices) of a shortest $s$--$T_I$ path in $D'[I]$
and by $\dist(s,y)$ the length of a shortest $s$--$y$ path in $D'[I]$ for each $y\in I$.
Note that $\dist(s,T_I)$ is odd and $\dist(s,y)$ is even for any $y\in I$.

\begin{lemma}\label{lem:part}
Let $s\in S_I\setminus T_I$. The following hold for any $y\in I$ and any $\ell=1,2,\dots$.
\begin{enumerate}
\renewcommand{\labelenumi}{{\rm (\alph{enumi})}}
	\item $P_y$ is defined in Step~2 if and only if $\dist(s,y)=2$.
	If defined, it is a shortest $s$--$y$ path in $D'[I]$.
	\item A sequence is returned in the $\ell$th process of Step~3(ii) if and only if $\dist(s,T_I)=2\ell+1$.
	The returned sequence is a shortest $s$--$T_I$ path in $D'[I]$.
	\item $P_y$ is defined in the $\ell$th process of Step~3(iii)  if and only if $\dist(s,y)=2\ell+2<\dist(s,T_I)$. 
	If defined, it is a shortest $s$--$y$ path in $D'[I]$.
\end{enumerate}
\end{lemma}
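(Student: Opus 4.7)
The plan is to prove the three parts together by induction on $\ell$, with (a) as the base case and (b), (c) as the inductive step at each level $\ell \geq 1$. For (a), the hypothesis $s \in S_I$ gives $I + s \in \cI_1$, hence $I + s - y \in \cI_1$ for every $y \in I$, so Step~2's check $I + s - y \in \cI_1 \cap \cI_2$ reduces to $(s, y) \in A_2[I]$; since $s \notin T_I$, this is equivalent to $(s, y) \in A'_2[I]$, i.e., $\dist(s, y) = 2$ in $D'[I]$, and $sy$ is trivially a shortest $s$--$y$ path.

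For the inductive step, fix $\ell \geq 1$ and assume (a) together with (b) and (c) at all smaller indices. The inductive hypothesis tells us that the sequences $P_{y'}$ with $|P_{y'}| = 2\ell$ correspond exactly to the elements $y' \in I$ satisfying $\dist(s, y') = 2\ell < \dist(s, T_I)$, and that each such $P_{y'}$ is a shortest $s$--$y'$ path in $D'[I]$. The crux is to establish the equivalence
\begin{align*}
\bigl[\,\{y', x\} \notin \cI_1 \cap \cI_2\ \text{and}\ I \triangle (P_{y'} + x) \in \cI_1 \cap \cI_2\,\bigr] \iff \bigl[\,(y', x) \in A'_1[I]\ \text{and}\ x \in T_I\,\bigr],
\end{align*}
together with its Step~3(iii) analogue whose right-hand side becomes $(y', x) \in A'_1[I]$ and $(x, y) \in A'_2[I]$. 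Granting these equivalences, (b) follows: Step~3(ii) halts at iteration $\ell$ iff some $A'_1[I]$-arc extends a shortest $s$--$y'$ path of length $2\ell$ to a sink, iff $\dist(s, T_I) = 2\ell + 1$ (by induction no shorter $s$--$T_I$ path exists), and the returned $P_{y'} + x$ is shortest. Part~(c) follows analogously, using that $P_y$ is never newly set when $\dist(s, y) \geq \dist(s, T_I)$ because the algorithm would already have halted in Step~3(ii).

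The reverse direction of each equivalence is routine, via the standard matroid intersection augmenting path theorem. The main obstacle is the forward direction, since the common independence oracle cannot distinguish whether $\{y', x\} \notin \cI_1 \cap \cI_2$ is witnessed in $\bM_1$ or in $\bM_2$. I would split accordingly. In the main case $\{y', x\} \notin \cI_1$, the partition matroid structure directly forces $(y', x) \in A'_1[I]$, while matroid exchange applied to $I$ and $J := I \triangle (P_{y'} + x) \in \cI_2$ produces an element of $J \setminus I$ in $T_I$, which must be $x$ because $s \notin T_I$ by hypothesis and every non-initial $E \setminus I$-vertex of $P_{y'}$ emits an outgoing $A'_2[I]$-arc along $P_{y'}$ and hence is not a sink. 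The complementary case $\{y', x\} \in \cI_1$ with $\{y', x\} \notin \cI_2$ would force $y', x$ to be parallel in $\bM_2$, and a class-by-class analysis of $J \in \cI_1$ in the partition matroid would then force $x \in S_I$; the same matroid-exchange argument would still demand $x \in T_I$, contradicting $x \in \cl_2(I)$ (which follows from $y' \in I$ and $y', x$ being parallel in $\bM_2$). Hence only the main case occurs, and the Step~3(iii) analogue is handled identically, using Lemma~\ref{lem:UPM-inv} and the Unique Perfect Matching Lemma (Lemma~\ref{lem:UPM}) in place of matroid exchange to deduce $(x, y) \in A'_2[I]$ from $J \in \cI_2$ with $|J| = |I|$.
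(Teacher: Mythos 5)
Your proposal is correct and follows essentially the same route as the paper: induction on $\ell$, reduction of the oracle conditions in Steps~3(ii) and 3(iii) to the structural conditions ``$(y',x)\in A'_1[I]$ and $x\in T_I$'' resp.\ ``$(y',x)\in A'_1[I]$ and $(x,y)\in A'_2[I]$,'' with the partition-matroid structure pinning down $A'_1[I]$ and a no-shorter-path argument (plus Lemmas~\ref{lem:UPM-inv} and~\ref{lem:UPM}) controlling the $\bM_2$ side. One small remark on the Step~3(ii) equivalence: your case split on whether $\{y',x\}\notin\cI_1$ or $\{y',x\}\notin\cI_2$ is avoidable --- as in the paper, one can first deduce $x\in T_I$ unconditionally from $I\triangle(P_{y'}+x)\in\cI_2$ (using either matroid exchange or the observation that $\cl_2(I\triangle P_{y'})=\cl_2(I)$ since $P_{y'}\cap T_I=\emptyset$), after which $\{y',x\}\in\cI_2$ is automatic and the hypothesis $\{y',x\}\notin\cI_1\cap\cI_2$ forces $\{y',x\}\notin\cI_1$, so the ``complementary case'' simply never arises.
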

\begin{proof}
For any $y\in I$, $\dist(s,y)=2$ means $(s,y)\in A'_2[I]$, which is equivalent to $I+s-y\in \cI_2$ as $s\not\in T_I$.
Since $s\in S_I$ implies $I+s-y\in \cI_1$, then $I+s-y\in \cI_1\cap \cI_2$ holds if and only if $(s,y)\in A'_2[I]$.
When $(s,y)\in A'_2[I]$, clearly $sy$ is a shortest $s$--$y$ path. Thus, (a) is shown.

We show (b) and (c) by induction on $\ell$. 
Suppose that they hold for $1,2,\dots,\ell-1$ and we are at the beginning of the $\ell$th process of Step~3. 
Then $\dist(s,T_I)\geq 2\ell+1$ because otherwise the algorithm has halted before.
Take any $y'\in I$ with $|P_{y'}|=2\ell$. Then
\begin{itemize}
    \item $P_{y'}$ is a shortest $s$--$y'$ path in $D'[I]$ by (a) and induction for (c);
    \item $\cl_2(I)=\cl_2(I\triangle P_{y'})$ because 
    \begin{itemize}
        \item $|I|=|I\triangle P_{y'}|$ and $I\triangle P_{y'}\in \cI_2$ hold by the algorithm (Steps 2 and 3(iii)), and
        \item $P_{y'}\cap T_I=\emptyset$ follows from $\dist(s,T_I)\geq 2\ell+1$.
    \end{itemize}
\end{itemize} 
As $P_{y'}$ is a path in $D'[I]$, it uses arcs in $A'_2[I]$ and $A'_1[I]$ alternately.
Let $N_1$ and $N_2$ be the sets of those arcs of $A'_1[I]$ and $A'_2[I]$, respectively.
By $s\in E\setminus I$ and $y'\in I$, then $N_1$ and $N_2$ form matchings that cover $V(P_{y'})-s-y'$ and $V(P_{y'})$, respectively (recall that we denote by $V(P)$ the set of elements in a sequence $P$ for emphasizing that we focus on the set rather than the sequence).

Take any $x\in E\setminus I$ with $x\not\in P_{y'}$.
The following claim completes the proof of (b).

\begin{claim}\label{claim:partition1}
$(y',x)\in A'_1[I]$ and $x\in T_I$ if and only if $\{y',x\}\not\in \cI_1\cap\cI_2$ and $I\triangle(P_{y'}+x)\in \cI_1\cap \cI_2$.
\end{claim}

\begin{proof}
For the ``only if'' part, suppose $(y',x)\in A'_1[I]$ and $x\in T_I$.
As $\bf{M}_1$ is a partition matroid with all-one upper bounds,
$(y',x)\in A'_1[I]$ means that $\{y',x\}$ is a circuit in $\bf{M}_1$, and hence $\{y',x\}\not\in \cI_1\cap\cI_2$.
Since $s\in S_I$ and $N_1+(y',x)$ forms a matching that covers $V(P_{y'}+x)-s$, we have $I\triangle(P_{y'}+x)\in \cI_1$.
(In $I\triangle(P_{y'}+x)$, each element in $I\cap (P_{y'}+x-s)$ is replaced by another element in the same partition class 
and $s$ comes from a partition class whose element is not used in $I$.)
Also, $\cl_2(I)=\cl_2(I\triangle P_{y'})$ and $x\in T_I$ imply $I\triangle (P_{y'}+x)=(I\triangle P_{y'})+x\in \cI_2$.
Thus, the ``only if'' part is shown.

For the ``if'' part, suppose $\{y',x\}\not\in \cI_1\cap\cI_2$ and $I\triangle (P_{y'}+x)\in \cI_1\cap\cI_2$.
As $\cl_2(I)=\cl_2(I\triangle P_{y'})$ holds, $I\triangle (P_{y'}+x)\in \cI_2$ implies $x\in T_I$.
Then $\{y',x\}\in \cI_2$, and hence $\{y',x\}\not\in \cI_1\cap\cI_2$ implies that $\{y',x\}$ is a circuit in $\bf{M}_1$.
Thus $(y',x)\in A'_1[I]$.
\end{proof}

Suppose that we are at the beginning of $\ell$th process of Step~3(iii).
Take $y'$ and $x$ as before and take any $y\in I$ such that $P_y$ is undefined.
Then $\dist(s,y)>2\ell$ by (a) and induction for (c).
Also $(y',x)\in A'_1[I]$ implies $x\not\in T_I$ 
since otherwise the algorithm has halted at Step~3 (ii).
The following claim completes the proof of (c).

\begin{claim}\label{claim:partition2}
Assume that $(y',x)\in A'_1[I]$ implies $x\not\in T_I$.
Then $(y',x)\in A'_1[I]$ and $(x,y)\in A'_2[I]$ if and only if 
$\{y',x\}\not\in \cI_1\cap\cI_2$ and $I\triangle(P_{y'}+x+y)\in \cI_1\cap \cI_2$.
\end{claim}

\begin{proof}
For the ``only if'' part, suppose $(y',x)\in A'_1[I]$ and $(x,y)\in A'_2[I]$.
Similarly to the proof of Claim~\ref{claim:partition1}, $(y',x)\in A'_1[I]$ implies 
$\{y',x\}\not\in \cI_1\cap\cI_2$ and $I\triangle(P_{y'}+x)\in \cI_1$,
and hence $I\triangle(P_{y'}+x+y)=(I\triangle(P_{y'}+x))-y\in \cI_1$.
Suppose, to the contrary, $I\triangle(P_{y'}+x+y)\not\in \cI_2$.
Since $N_2+(x,y)\subseteq A'_2[I]$ forms a perfect matching on $V(P_{y'}+x+y)$, 
Lemma~\ref{lem:UPM} implies that there exists some other perfect matching $N'_2\subseteq A_2[I]$ on $V(P_{y'}+x+y)$.
This $N'_2$ is included in $A'_2[I]$ because $(P_{y'}+x+y)\cap T_I=\emptyset$ follows from $P_{y'}\cap T_I=\emptyset$ and $x\not\in T_I$.
Then, $D'[I]$ contains an $s$--$y$ path with arcs in $N_1\cup N'_2$ and length at most $2\ell$, 
which contradicts $\dist(s,y)>2\ell$.

For the ``if'' part, suppose $\{y',x\}\not\in \cI_1\cap\cI_2$ and $I\triangle(P_{y'}+x+y)\in \cI_1\cap \cI_2$.
By Lemma~\ref{lem:UPM-inv}, $I\triangle(P_{y'}+x+y)\in\cI_2$ implies that  
$A_2[I]$ contains a perfect matching $N_2''$ on $V(P_{y'}+x+y)$. 
Conversely, suppose $(x,y)\not\in N''_2$. Then $(x^*,y)\in N_2''$ for some $x^*\in P_{y'}$,
and $P_{y'}\cap T_I=\emptyset$ implies $(x^*,y)\in A'_2[I]$. 
Hence $D'[I]$ has an $s$--$y$ path with length at most $2\ell$, a contradiction.
Thus, $(x,y)\in N''_2\subseteq A_2[I]$, which implies $x\in T_I$ or $(x,y)\in A'_2[I]$,
where the latter implies $y\in C_2(I,x)\not\subseteq \{y',x\}$.
Thus, in both cases, $\{y',x\}\in \cI_2$. Therefore, $\{y',x\}\not\in \cI_1\cap\cI_2$ implies $\{y',x\}\not\in \cI_1$,
and hence $y'\in C_1(I,x)$, and $(y',x)\in A'_1[I]$ follows.
By assumption, we then have $x\not\in T_I$, and hence $(x,y)\in A'_2[I]$.
\end{proof}

Thus, both (b) and (c) hold for $\ell$.
\end{proof}

Lemma~\ref{lem:part} completes the proof of correctness of Algorithm~\ref{alg:part2}.

\begin{lemma}\label{lem:EmulatingBFS}
	The output of {\sc EmulatingBFS}$[E, \cI_1 \cap \cI_2, I,s]$ is always correct.
\end{lemma}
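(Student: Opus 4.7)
The lemma reduces to two main claims: soundness, meaning any returned sequence $P$ satisfies $I\triangle P \in \cI_1 \cap \cI_2$ with $|I \triangle P| = k+1$; and completeness in the form of the ``In particular'' clause, meaning that if $s \in S_I$ and $D'[I]$ has an $s$--$T_I$ path, a shortest one is returned. Both are essentially immediate consequences of Lemma~\ref{lem:part}, which has already done the hard inductive work of characterizing when each $P_y$ becomes defined and what it then equals.

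For soundness, I would inspect each return statement directly. Step~1 returns $s$ only after verifying $I+s \in \cI_1\cap\cI_2$, and clearly $|I+s|=k+1$. Step~3(ii) returns $P_{y'}+x$ only after verifying $I \triangle (P_{y'}+x) \in \cI_1\cap\cI_2$; by Lemma~\ref{lem:part}(a) and~(c), any defined $P_{y'}$ with $|P_{y'}|=2\ell$ is a genuine $s$--$y'$ path in $D'[I]$, so $|P_{y'}+x|=2\ell+1$ and hence $|I \triangle (P_{y'}+x)| = |I|+1 = k+1$.

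For completeness, I would split on whether $s \in T_I$. If $s \in S_I \cap T_I$ then $I+s \in \cI_1 \cap \cI_2$, so Step~1 returns the length-$1$ path $s$, which is indeed a shortest $s$--$T_I$ path. Otherwise $s \in S_I \setminus T_I$; let $L = \dist(s, T_I)$, which is odd and at least $3$ by bipartiteness of $D'[I]$, and set $\ell^\ast = (L-1)/2$. By Lemma~\ref{lem:part}(a) and~(c), a straightforward induction on $\ell$ shows that just before the $\ell^\ast$th execution of Step~3, every $y \in I$ with $\dist(s,y) \leq 2\ell^\ast$ has $P_y$ set to a shortest $s$--$y$ path in $D'[I]$. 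In particular, for the predecessor $y'$ of some $t \in T_I$ on a shortest $s$--$T_I$ path, we have $|P_{y'}| = 2\ell^\ast$, and Lemma~\ref{lem:part}(b) then guarantees that Step~3(ii) fires at iteration $\ell^\ast$ and returns a shortest $s$--$T_I$ path.

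For the ``No'' branch, I would observe that by Lemma~\ref{lem:part}(c), a new $P_y$ can only be defined when $\dist(s, y) < \infty$; hence once all such $y$ have been assigned, no $y \in I$ with $|P_y| = 2\ell$ remains for large enough $\ell$, and Step~3(i) eventually triggers. Combined with part~(b) of Lemma~\ref{lem:part}, this confirms that ``No'' is returned precisely when no augmenting $s$--$T_I$ path exists in $D'[I]$, matching the specification. The main obstacle in the argument is propagating the invariants of Lemma~\ref{lem:part} cleanly across iterations, but since that lemma has already been proved by induction, the present proof is essentially bookkeeping.
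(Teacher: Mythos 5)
Your argument takes essentially the same route as the paper: the completeness claim (the ``In particular'' clause) is handled by splitting on whether $s \in T_I$ and then invoking Lemma~\ref{lem:part}(b) for the case $s \in S_I \setminus T_I$, which is exactly what the paper's proof does.

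However, your soundness argument overreaches when it appeals to Lemma~\ref{lem:part}(a) and~(c): those parts are stated only under the hypothesis $s \in S_I \setminus T_I$, whereas soundness (that any returned sequence $P$ satisfies $I \triangle P \in \cI_1 \cap \cI_2$ and $|I \triangle P| = k+1$) must hold for \emph{every} $s \in E \setminus I$, since Algorithm~\ref{alg:part} calls the subroutine for all such $s$, including $s \notin S_I$. For $s \notin S_I$ the sequences $P_y$ need not be paths in $D'[I]$ at all, so Lemma~\ref{lem:part} gives you nothing. The common-independence part of soundness does not need it anyway, as the algorithm checks $I \triangle (P_{y'}+x) \in \cI_1 \cap \cI_2$ explicitly before returning, and the cardinality claim should instead be argued directly from the structure of the algorithm: the sequences alternate between $E \setminus I$ and $I$ starting at $s \in E \setminus I$, elements remain distinct because Step~3(iii) enforces $x \notin P_{y'}$ and only extends to elements $y$ with $P_y = \nul$ (which therefore cannot already occur in $P_{y'}$), and the return at Step~3(ii) appends a single element of $E \setminus I$, giving $|I \triangle (P_{y'}+x)| = |I|+1$. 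The paper sidesteps this by noting, in the paragraph preceding the lemma, that soundness is ``clear by the algorithm''; your parenthetical derivation via Lemma~\ref{lem:part} should be replaced by the elementary argument just described.
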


\begin{proof}
If $\dist(s,T_I)=1$, i.e., if $s\in S_I\cap T_I$, then the algorithm correctly returns $s$ at Step 1.
If $\dist(s,T_I)=2\ell+1>1$, then $s\in S_I\setminus T_I$, and hence Lemma~\ref{lem:part}(b) implies that a shortest $s$--$T_I$ path is returned in the $\ell$th process of Step 3(ii).
\end{proof}

Using {\sc EmulatingBFS} (Algorithm~\ref{alg:part2}) as a subroutine, we design a procedure that emulates {\sc Augment}$[E, \cI_1, \cI_2, I]$
with the common independence oracle.
\begin{algorithm2e}[h!]
	\caption{{{\sc AugmentCommonIndependencePartition}$[E, \cI_1 \cap \cI_2, I]$}} \label{alg:part}
	\SetAlgoLined	
	\SetKwInOut{Input}{Input}\SetKwInOut{Output}{Output}
	\Input{Oracle access to $\cI_1 \cap \cI_2$ where $\cI_1$ is the independent set family of a partition matroid, and a common independent set $I \in \cI^k_1 \cap \cI^k_2$.}
	\Output{A common independent set $J \in \cI^{k+1}_1 \cap \cI^{k+1}_2$ if one exists, or a message \emph{``No''} otherwise.}
	\BlankLine
	
	If $I + x \in \cI_1 \cap \cI_2$ for some $x \in E \setminus I$, then halt with returning $J = I + x$. \label{st:part1}
	
	For each $s \in E \setminus I$, perform {\sc EmulatingBFS}$[E, \cI_1 \cap \cI_2, I, s]$. If some sequence $P$ is returned, 
	then return $J=I\triangle P$. Otherwise return with the message \emph{``No''}. \label{st:part2}
\end{algorithm2e}

\begin{lemma}\label{lem:common_indep_partition}
	The output of {\sc AugmentCommonIndependencePartition}$[E, \cI_1 \cap \cI_2, I]$ (Algorithm~\ref{alg:part}) is always correct.
\end{lemma}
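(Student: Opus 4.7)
My plan is to prove the lemma by a short case analysis on which branch of Algorithm~\ref{alg:part} produces the output. The two branches that return an augmented set are essentially immediate given Lemma~\ref{lem:EmulatingBFS}, whereas the \emph{``No''} branch requires the only substantive argument. The main obstacle---correctness of the BFS emulation itself---has already been dispatched by Lemma~\ref{lem:EmulatingBFS}, so the remaining task is to combine it with the classical matroid-intersection augmenting-path framework.

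First I dispatch the two ``positive'' branches. If Step~\ref{st:part1} returns $J = I + x$, then $I + x \in \cI_1 \cap \cI_2$ by construction and $|J| = k + 1$, so $J$ is a valid output. If Step~\ref{st:part2} returns $J = I \triangle P$ for some sequence $P$ produced by {\sc EmulatingBFS}, then the output specification of that subroutine, validated by Lemma~\ref{lem:EmulatingBFS}, yields $I \triangle P \in \cI_1 \cap \cI_2$ and $|I \triangle P| = k + 1$, so again $J$ is valid.

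The central case is when Algorithm~\ref{alg:part} returns \emph{``No''}. I will argue by contradiction. Note first that Step~\ref{st:part1} having failed means no $x \in E \setminus I$ satisfies $I + x \in \cI_1 \cap \cI_2$; in particular, $S_I \cap T_I = \emptyset$. Suppose that some $J^\ast \in \cI_1^{k+1} \cap \cI_2^{k+1}$ nevertheless exists. By the standard matroid-intersection augmenting-path theorem---equivalently, the correctness of Algorithm~\ref{alg:1} as a consequence of Theorem~\ref{thm:Edmonds}---the existence of a larger common independent set forces $D[I]$, and hence also $D'[I]$, to contain some $S_I$--$T_I$ path. Pick $s^\ast \in S_I$ from which some element of $T_I$ is reachable in $D'[I]$; when Step~\ref{st:part2} invokes {\sc EmulatingBFS}$[E, \cI_1 \cap \cI_2, I, s^\ast]$, the ``in particular'' clause of Lemma~\ref{lem:EmulatingBFS} guarantees that a shortest $s^\ast$--$T_I$ path is returned. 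Hence Step~\ref{st:part2} would have produced some $I \triangle P$ rather than \emph{``No''}, contradicting the assumption and completing the proof.
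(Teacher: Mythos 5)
Your proof is correct and follows essentially the same route as the paper's: both dispatch the positive branches by the output guarantees of {\sc EmulatingBFS} (Lemma~\ref{lem:EmulatingBFS}) and handle the \emph{``No''} branch via the correctness of the classical {\sc Augment} algorithm, deducing that an $S_I$--$T_I$ path in $D[I]$ (hence in $D'[I]$) yields a path from some $s \in S_I$ that {\sc EmulatingBFS} is guaranteed to find. The extra observation that Step~\ref{st:part1} failing implies $S_I \cap T_I = \emptyset$ is a harmless addition that is implicit in the paper's argument.
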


\begin{proof}
The output in Step~\ref{st:part1} is clearly correct. As Step~\ref{st:part2} returns some sequence $P$ only if $I\triangle P$ is indeed a common independent set of size $k+1$, it suffices to show that if there exists a common independent set $J$ of size $k+1$,
then {\sc EmulatingBFS}$[E, \cI_1\cap\cI_2,I, s]$ returns a sequence for some $s \in E \setminus I$. 
By the correctness of {\sc Augment}$[E, \cI_1, \cI_2, I]$, the existence of such $J$ implies
that $D[I]$ has some $S_I$--$T_I$ path, and so does $D'[I]$. Then $D'[I]$ contains an $s$--$T_I$ path for some $s\in S_I$. 
For such $s$, {\sc EmulatingBFS}$[E, \cI_1\cap\cI_2,I, s]$ returns a shortest $s$--$T_I$ path in $D'[I]$.
\end{proof}

Lemma~\ref{lem:common_indep_partition} completes the proof of Theorem~\ref{thm:ci}.

\begin{proof}[Proof of Theorem~\ref{thm:ci}]
	Starting from $I\coloneqq\emptyset$, the size of the common independent set can be gradually increased using {\sc AugmentCommonIndependencePartition}$[E, \cI_1 \cap \cI_2, I]$ until $I$ becomes a common independent set of maximum cardinality. The correctness of the algorithm follows by Lemma~\ref{lem:common_indep_partition}.
\end{proof}

\subsection{Intersection with Elementary Split Matroid}
\label{sec:split}
Motivated by the study of matroid polytopes from a tropical geometry point of view, Joswig and Schr\"oter~\cite{joswig2017matroids} introduced the notion of \textbf{split matroids}. This class does not only generalize paving matroids, but it is closed both under duality and taking minors. B\'erczi, Kir\'aly, Schwarcz, Yamaguchi and Yokoi~\cite{berczi2023hypergraph} later observed that every split matroid can be obtained as the direct sum of a so-called \textbf{elementary split matroid} and uniform matroids. Elementary split matroids capture all the nice properties of connected split matroids, and is closed not only under duality and taking minors but also truncation. Motivated by representations of paving matroids by hypergraphs, they provided a hypergraph characterization of elementary split matroids as follows.

Let $E$ be a ground set of size at least $r$, $\cH=\{H_1,\dots, H_q\}$ be a (possibly empty) collection of subsets of $E$ (called \textbf{hyperedges}), and $r, r_1, \dots, r_q$ be nonnegative integers satisfying
\begin{align}
|H_i \cap H_j| &\le r_i + r_j -r &&\text{for $1 \le i < j \le q$,}\tag*{(H1)}\label{eq:h1}\\
|E\setminus H_i| + r_i &\ge r &&\text{for $i=1,\dots, q$.} \tag*{(H2)}\label{eq:h2}
\end{align}
Then $\cI=\{\, X\subseteq E\mid |X|\leq r,\ |X\cap H_i|\leq r_i\ \text{for $1\leq i \leq q$} \,\}$ forms the family of independent sets of a rank-$r$ matroid $M$ with rank function $r_M(Z)=\min\big\{r,|Z|,\min_{1\leq i\leq q}\{|Z\setminus H_i|+r_i\}\big\}$.
Matroids that can be obtained in this form are called \textbf{elementary split matroids}. 

We call a set $F\subseteq E$ \textbf{$H_i$-tight} or \textbf{tight with respect to $H_i$} if $|F\cap H_i|=r_i$. The following lemma shows that an independent set of size less than $r$ cannot be tight with respect to two different hyperedges.

\begin{lemma}\label{lem:tight}
Let $M$ be an elementary split matroid with representation $\cH=\{H_1,\dots,H_q\}$ and $r,r_1,\dots,r_q$, and let $F$ be a set of size less than $r$. Then $F$ is tight with respect to at most one of the hyperedges.
\end{lemma}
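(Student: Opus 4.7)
The plan is a direct proof by contradiction, using inclusion--exclusion on a pair of hyperedges together with condition~\ref{eq:h1}. Suppose for contradiction that some $F \subseteq E$ with $|F| < r$ is tight with respect to two distinct hyperedges $H_i$ and $H_j$, so that $|F \cap H_i| = r_i$ and $|F \cap H_j| = r_j$.

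First I would apply inclusion--exclusion to $F \cap (H_i \cup H_j)$ and then use the trivial upper bound $|F \cap H_i \cap H_j| \le |H_i \cap H_j|$ to derive
\[
|F \cap (H_i \cup H_j)| = |F \cap H_i| + |F \cap H_j| - |F \cap H_i \cap H_j| \ge r_i + r_j - |H_i \cap H_j|.
\]
Next I would invoke~\ref{eq:h1}, which rearranges to $r_i + r_j - |H_i \cap H_j| \ge r$; this gives $|F \cap (H_i \cup H_j)| \ge r$. Combining with the inclusion $F \cap (H_i \cup H_j) \subseteq F$ and the hypothesis $|F| < r$ yields the contradiction $r \le |F \cap (H_i \cup H_j)| \le |F| < r$.

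The whole argument is essentially a one-line calculation, so the ``obstacle'' here is conceptual rather than technical: one has to recognize that~\ref{eq:h1} is designed precisely so that tightness at two distinct hyperedges forces the set to meet the global rank bound $r$. It is worth noting that independence of $F$ plays no role whatsoever---only the size constraint $|F| < r$ is used---and hypothesis~\ref{eq:h2} is not needed for this lemma. In particular, the same argument would go through for any (not necessarily independent) subset $F$ of cardinality strictly less than $r$ satisfying $|F \cap H_i| \ge r_i$ and $|F \cap H_j| \ge r_j$.
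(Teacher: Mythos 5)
Your proof is correct and is essentially the same argument as the paper's: both apply inclusion--exclusion to $F\cap H_i$, $F\cap H_j$, and $F\cap(H_i\cup H_j)$, combine with $|F|<r$, and contradict~\ref{eq:h1}. The only difference is the order in which the inequalities are chained, which does not change the substance.
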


\begin{proof}
Suppose to the contrary that $F$ is both $H_i$- and $H_j$-tight. Then we get
\begin{align*}
    |H_i\cap H_j|
    {}&{}\geq  
    |F\cap H_i\cap H_j|
    = 
    |F\cap H_i|+|F\cap H_j|-|F\cap (H_i\cup H_j)|\\
    {}&{}\geq  
    r_i+r_j-|F|
    >
    r_i+r_j-r,
\end{align*}
contradicting~\ref{eq:h1}.
\end{proof}

Now we show that the weighted matroid intersection problem is tractable under the common independence oracle when $\bM_1$ is an \textbf{elementary split matroid}, that is, when $\cI_1$ can be represented as $\cI_1=\{\, X\subseteq E\mid |X|\leq r,\ |X\cap H_i|\leq r_i\ \text{for $1\leq i \leq q$} \,\}$ for some (possibly empty) hypergraph $\cH=\{H_1,\dots, H_q\}$ and nonnegative integers $r, r_1, \dots, r_q$ satisfying \ref{eq:h1} and \ref{eq:h2}. The proof is based on observing that the exchangeability graph has a special structure.

\begin{proof}[Proof of Theorem~\ref{thm:split}]
Suppose that we have oracle access to the common independent set family $\cI_1\cap \cI_2$ of two matroids on $E$, where $\cI_1$ belongs to an elementary split matroid. Consider a $w$-maximal set $I \in \cI_1^k \cap \cI_2^k$ for some $k\in\{0, 1, \dots, n - 1\}$. According to Algorithm~\ref{alg:2} and Lemma~\ref{lem:shortest-cheapest-path}, a $w$-maximal set $J\in\cI_1^{k+1}\cap\cI_2^{k+1}$, if exists, can be obtained in the form $J=I\triangle P$ where $P$ is a shortest cheapest $S_I$--$T_I$ path in $D'[I]$. 

\begin{claim}\label{cl:short}
If $\cI_1^{k+1}\cap\cI_2^{k+1}\neq\emptyset$, then there exists a shortest cheapest $S_I$--$T_I$ path in $D'[I]$ of length at most $3$.
\end{claim}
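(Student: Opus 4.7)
My plan is to show that among all cheapest $S_I$--$T_I$ paths in $D'[I]$, at least one has length $\le 3$; since a shortest cheapest path is no longer than any cheapest path, this immediately yields the claim. I fix any cheapest $S_I$--$T_I$ path $P$, write it as $P = s_0\, y_1\, s_1 \cdots y_m\, s_m$ of length $2m + 1$, and either see that $m = 0$ or shortcut down to length $3$ via a one-step surgery guided by the split structure.

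The structural setup uses that $|I| = k < r$ (since $\cI_1^{k+1}\cap\cI_2^{k+1} \ne \emptyset$ forces $k + 1 \le r$), so Lemma~\ref{lem:tight} says $I$ is tight with at most one hyperedge of $\bM_1$. If $I$ is tight with none, then $I + x \in \cI_1$ for every $x \in E \setminus I$, so $S_I = E \setminus I$ and $A'_1[I] = \emptyset$; any $S_I$--$T_I$ path then has length $1$ and we are done. Otherwise let $H_i$ be the unique tight hyperedge. Then $S_I = (E\setminus I) \setminus H_i$, and I claim that $C_1(I, x) = I \cap H_i$ for every $x \in (E\setminus I) \cap H_i$: the $H_i$-constraint $|(I + x - y) \cap H_i| \le r_i$ forces $y \in H_i$, while every other hyperedge $H_j$ has room for one more element because $I$ is not $H_j$-tight.

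Now assume $m \ge 1$. The arcs $(y_j, s_j) \in A'_1[I]$ appearing in $P$ force $s_j \in (E\setminus I)\cap H_i$ (since $(E\setminus I)\setminus S_I = (E\setminus I)\cap H_i$) and hence $y_j \in C_1(I, s_j) = I \cap H_i$, for every $j = 1, \dots, m$. I then consider the shortcut $P' \coloneqq s_0\, y_1\, s_m$: the arc $(s_0, y_1) \in A'_2[I]$ is already used by $P$, and $(y_1, s_m) \in A'_1[I]$ holds since $s_m \in (E\setminus I)\cap H_i$ and $y_1 \in I\cap H_i = C_1(I, s_m)$, so $P'$ is a legitimate length-$3$ $S_I$--$T_I$ path in $D'[I]$.

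For the cost side, the telescoping $c(P) - c(P') = \sum_{j=1}^{m-1}\bigl(w(y_{j+1}) - w(s_j)\bigr)$ decomposes into the costs of the $2$-cycles $y_{j+1}\, s_j\, y_{j+1}$, each of which lies in $D'[I]$: its forward arc $(s_j, y_{j+1})$ is in $P$, and its reverse arc $(y_{j+1}, s_j) \in A'_1[I]$ again because $s_j \in (E\setminus I)\cap H_i$ and $y_{j+1} \in I\cap H_i$. Lemma~\ref{lem:negative_cycle} together with the $w$-maximality of $I$ gives that each such cycle has nonnegative cost, so $c(P') \le c(P)$, and cheapness of $P$ forces equality. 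The main obstacle, and the only place the elementary split assumption is really used, is the rigidity $C_1(I, x) = I \cap H_i$: this is precisely what simultaneously makes the shortcut arc $(y_1, s_m)$ legitimate and produces the $2$-cycles that certify $c(P') = c(P)$.
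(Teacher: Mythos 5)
Your proof is correct and follows essentially the same route as the paper's. Both rest on the same two pillars: Lemma~\ref{lem:tight} makes $I$ tight with at most one hyperedge $H_i$, which forces $A'_1[I]$ to be a complete bipartite digraph from $I\cap H_i$ to $(E\setminus I)\cap H_i$ (your formulation is $C_1(I,x)=I\cap H_i$, the paper's is $A'_1[I]=\{(y,x)\mid x\in H_i\setminus I,\,y\in H_i\cap I\}$ — the same fact), and this completeness provides the shortcut arc turning $s_0\,y_1\,s_m$ into a valid length-$3$ path. The only point of divergence is cosmetic: to certify $c(P')\le c(P)$, the paper closes the leftover vertices $e_3,\dots,e_{\ell-1}$ into a single cycle via the one arc $(e_{\ell-1},e_3)$ and invokes Lemma~\ref{lem:negative_cycle} once, whereas you decompose those same leftover vertices into $m-1$ two-cycles $y_{j+1}\,s_j\,y_{j+1}$ and invoke the lemma on each. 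Both decompositions partition the same vertex set and rely on the same completeness of $A'_1[I]$, so the arguments are interchangeable; the paper's single-cycle version is marginally more economical (one new arc rather than $m-1$), while your $2$-cycle version makes the local nonnegativity more transparent. You also explicitly dispatch the degenerate case where $I$ is tight with no hyperedge (then $A'_1[I]=\emptyset$ and every $S_I$--$T_I$ path has length $1$); the paper leaves this implicit since a path of length $>3$ would need an $A'_1[I]$ arc.
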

\begin{proof}
As $\cI_1^{k+1}\cap\cI_2^{k+1}\neq\emptyset$, there necessarily exists an $S_I$--$T_I$ path in $D'[I]$; let $P=e_1e_2\cdots e_\ell$ be a shortest cheapest one. As $I$ is not a basis of $\bM_1$, observe that $I+x\notin\cI_1$ for some $x\in E\setminus I$ if and only if there exists a hyperedge $H_i$ such that $I$ is $H_i$-tight and $x\in H_i$. By Lemma~\ref{lem:tight}, $I$ is tight with respect to at most one of the hyperedges, hence we get that $E\setminus (S_I\cup I)\subseteq H_i$. This also implies that $A'_1[I]=\{\,(y,x)\mid x\in H_i\setminus I,~y\in H_i\cap I\,\}$.

Assume that the length of the path is more than $3$. Then, by the above observation, both $(e_2, e_\ell)$ and $(e_{\ell-1}, e_3)$ exist in $D'[I]$. Therefore $C\coloneqq e_3e_4\cdots e_{\ell-1}e_3$ is a cycle in $D'[I]$, and $P'\coloneqq e_1e_2e_\ell$ is an $S_i$--$T_i$ path in $D'[I]$. By Lemma~\ref{lem:negative_cycle}, the cost of $C$ is nonnegative, and hence the cost of $P'$ is at most the cost of $P$, contradicting the choice of $P$.
\end{proof}

By Claim~\ref{cl:short}, a $w$-maximal member of $\cI_1^{k+1}\cap\cI_2^{k+1}$, if exists, can be found by checking every set $I'$ of size $k+1$ with $|I\triangle I'|\leq 2$. This concludes the proof of the theorem.
\end{proof}

\subsection{Complemented with Maximum Rank Oracle}
\label{sec:cimax}
When access is given to both a common independence and a maximum rank oracle, every step of Algorithms~\ref{alg:3} and \ref{alg:4}, i.e., {\sc EmulatingBellmanFord} and {\sc CheapestPathAugmentRankSum}, can be emulated and hence the weighted matroid intersection problem is solved as with Section~\ref{sec:rank_sum}. 

\begin{proof}[Proof of Theorem~\ref{thm:cimax}]
Suppose that we have oracle access to the common independent set family $\cI_1 \cap \cI_2$ and the maximum rank function $\rmax$ of two matroids on $E$ instead of that to $\rsum$. In {\sc EmulatingBellmanFord} and {\sc CheapestPathAugmentRankSum}, we ask the rank sum oracle the following types of questions:
\begin{enumerate}
\renewcommand{\labelenumi}{(\alph{enumi})}
\item whether $\rsum(I' + x) = 2|I'|$ or not for $I' \in \cI_1 \cap \cI_2$ and $x \in E \setminus I'$, \label{it:reda}
\item whether $\rsum(I' + x) = 2|I'| + 1$ or not for $I' \in \cI_1 \cap \cI_2$ and $x \in E \setminus I'$, \label{it:redb}
\item whether $\rsum(I' + x) = 2|I'|+2$ or not for $I' \in \cI_1 \cap \cI_2$ and $x \in E \setminus I'$, \label{it:redc} and
\item whether $\rsum(I') = 2|I'|$ or not for $I' \subseteq E$. \label{it:redd}
\end{enumerate}

These questions can be tested using the common independence and the maximum rank oracles together as follows. The answer to \eqref{it:reda} is \emph{``Yes''} if and only if $I' + x \notin \cI_1 \cap \cI_2$ and $\rmax(I' + x) = |I'|$, the answer to \eqref{it:redb} is \emph{``Yes''} if and only if $I' + x \not\in \cI_1 \cap \cI_2$ and $\rmax(I' + x) = |I'| + 1$, the answer to \eqref{it:redc} is \emph{``Yes''} if and only if $I' + x \in \cI_1 \cap \cI_2$, and the answer to \eqref{it:redd} is \emph{``Yes''} if and only if $I' \in \cI_1 \cap \cI_2$. 
\end{proof}

\section{Concluding Remarks}
In this paper, we have shown the tractability of unweighted/weighted matroid intersection problems under several restricted oracles.
The rank sum oracle or the combination of the common independence oracle and the maximum rank oracle is enough to solve the weighted matroid intersection problem in polynomial time (Theorems~\ref{thm:ranksum} and \ref{thm:cimax}).
Also, if one matroid is restricted to a partition matroid with all-one upper bound or to an elementary split matroid, the unweighted or weighted problem, respectively, can be solved only using the common independence oracle (Theorems~\ref{thm:ci} and \ref{thm:split}).

The following two big questions still remain, and our subsequent paper~\cite{inpreparation} is tackling the first question.

\begin{question}
Is there a strongly polynomial-time algorithm for the weighted matroid intersection problem in the minimum rank oracle model?
Or can we show the hardness?
\end{question}

\begin{question}
Is there a strongly polynomial-time algorithm for the unweighted/weighted matroid intersection problem in the common independence oracle model?
Or can we show the hardness?
\end{question}

\section*{Acknowledgments}
We are grateful to Yuni Iwamasa and Taihei Oki for initial discussions on the problem.
We would like to thank the reviewers, who read this paper and provided positive and helpful comments.

The work was supported by the Lend\"ulet Programme of the Hungarian Academy of Sciences -- grant number LP2021-1/2021 and by the Hungarian National Research, Development and Innovation Office -- NKFIH, grant numbers FK128673 and TKP2020-NKA-06.
Yutaro Yamaguchi was supported by JSPS KAKENHI Grant Numbers 20K19743 and 20H00605, and by Overseas Research Program in Graduate School of Information Science and Technology, Osaka University.
Yu Yokoi was supported by JST PRESTO Grant Number JPMJPR212B.

\bibliographystyle{abbrv}
\bibliography{MI_oracle}

\end{document}